\documentclass[runningheads,USenglish]{llncs}
\usepackage[T1]{fontenc}
\usepackage{graphicx}
\graphicspath{{images/}}
\usepackage[]{todonotes}
\usepackage{microtype}
\usepackage[font=small, labelfont=bf]{caption}
\usepackage[font=small, labelfont=small]{subcaption}
\usepackage{xspace}
\usepackage{amssymb}
\usepackage{xcolor}
\usepackage{amsmath}
\definecolor{defblue}{rgb}{0.121,0.47,0.705}
\definecolor{linkblue}{rgb}{0.198,0.198,0.5392}
\let\emph\relax
\DeclareTextFontCommand{\emph}{\color{defblue}\em}
\DeclareTextFontCommand{\bl}{\color{defblue}}
\usepackage[colorlinks=true, 
    linkcolor=linkblue, 
    anchorcolor=linkblue,
    citecolor=linkblue, 
    filecolor=linkblue,
    menucolor=linkblue,
    urlcolor=linkblue, 
    bookmarks=true, 
    bookmarksopen=true,
    bookmarksopenlevel=2, 
    bookmarksnumbered=true, 
    hyperindex=true,
    plainpages=false, 
    pdfpagelabels=true
]{hyperref}

\definecolor{dark red}{rgb}{0.89,0.102,0.109}
\DeclareTextFontCommand{\re}{\color{dark red}}
\definecolor{dark orange}{rgb}{1,0.498,0}
\DeclareTextFontCommand{\og}{\color{dark orange}}
\definecolor{dark green}{rgb}{0.2,0.627,0.172}
\DeclareTextFontCommand{\gr}{\color{dark green}}

\usepackage[capitalise,noabbrev,nameinlink]{cleveref}
\usepackage[inline]{enumitem}

\usepackage{thmtools,apptools,thm-restate}
\newcommand{\restateref}[1]{\IfAppendix{\hyperref[#1]{$\star$}}{\hyperref[#1*]{$\star$}}}

\renewcommand{\orcidID}[1]{\href{https://orcid.org/#1}{\includegraphics[scale=.03]{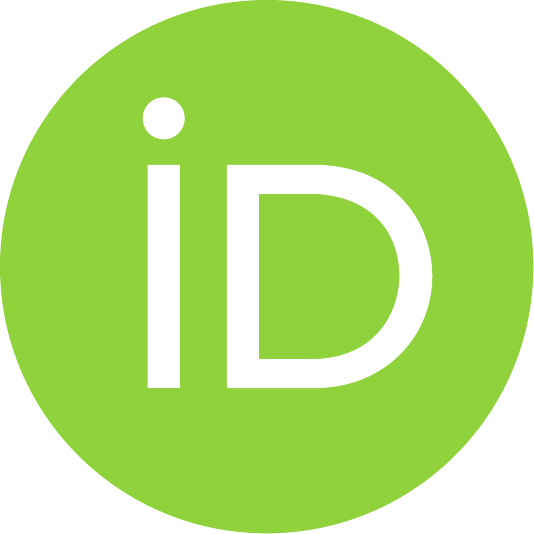}}}

\usepackage{listings} 			%
\usepackage{rotating}

\usepackage[ruled,vlined]{algorithm2e}
\SetKw{KwAnd}{and}
\SetKw{KwOr}{or}
\makeatletter
\def\BState{\State\hskip-\ALG@thistlm}
\makeatother

\let\doendproof\endproof
\renewcommand\endproof{~\hfill$\qed$\doendproof}

\usepackage{etoolbox}
\BeforeBeginEnvironment{algorithm}{\definecolor{defblue}{rgb}{0,0,0}}
\AfterEndEnvironment{algorithm}{\definecolor{defblue}{rgb}{0.121,0.47,0.705}}
\crefname{condition}{Condition}{Conditions}
\crefname{condition}{Condition}{Conditions}
\crefname{property}{Property}{Properties}
\crefname{property}{Property}{Properties}
\crefname{case}{Case}{Cases}
\crefname{case}{Case}{Cases}
\newenvironment{sketch}{\proof}{\endproof}

\begin{document}

\title{Mutual Witness Proximity Drawings of Isomorphic Trees}
\author{Carolina Haase\inst{1}\orcidID{0000-0001-6696-074X} \and
Philipp Kindermann\inst{1}\orcidID{0000-0001-5764-7719} \and
William J. Lenhart\inst{2}\orcidID{0000-0002-8618-2444} \and
Giuseppe Liotta\inst{3}\orcidID{0000-0002-2886-9694}}

\authorrunning{Haase et al.}
\institute{Universit\"at Trier, Trier, Germany\\
\email{\{haasec,kindermann\}@uni-trier.de} \and
Williams College, Williamstown, USA\\
\email{wlenhart@williams.edu} \and
Università degli Studi di Perugia, Perugia, Italy\footnote{Research partially supported by: (i) MUR PRIN Proj. 2022TS4Y3N - ``EXPAND: scalable algorithms for EXPloratory Analyses of heterogeneous and dynamic Networked Data''; (ii) MUR PRIN Proj. 2022ME9Z78 - ``NextGRAAL: Next-generation algorithms for constrained GRAph visuALization''}\\
\email{giuseppe.liotta@unipg.it}}

\maketitle

\begin{abstract}
A pair $\langle G_0, G_1 \rangle$ of graphs admits a mutual witness proximity drawing $\langle \Gamma_0, \Gamma_1 \rangle$ when: (i) $\Gamma_i$ represents $G_i$, and (ii) there is an edge $(u,v)$ in $\Gamma_i$ if and only if there is no vertex $w$ in $\Gamma_{1-i}$ that is ``too close'' to both $u$ and $v$ ($i=0,1$). In this paper, we consider infinitely many definitions of closeness by adopting the $\beta$-proximity rule for any
$\beta \in [1,\infty]$ and study pairs of isomorphic trees that admit a mutual witness $\beta$-proximity drawing. Specifically, we show that every two isomorphic trees admit a mutual witness $\beta$-proximity drawing for any $\beta \in [1,\infty]$.  The constructive technique can be made ``robust'': For some tree pairs we can suitably prune linearly many leaves from one of the two trees and still retain their mutual witness $\beta$-proximity drawability.  Notably, in the special case of isomorphic caterpillars and $\beta=1$, we construct linearly separable mutual witness Gabriel drawings. 
\keywords{Mutual witness proximity drawings, $\beta$-proximity, Trees}
\end{abstract}

\section{Introduction}\label{se:intro}
Proximity drawings are geometric graphs (i.e., straight-line drawings) such that any two vertices are connected by an edge if and only if they are deemed to be close according to some definition of closeness. Therefore, proximity drawings are such that pairs of non-adjacent vertices are relatively far apart while highly connected subgraphs correspond to groups of vertices that can be naturally clustered together in a visual inspection. 

In this paper, we investigate \emph{mutual witness proximity drawings}, which employ the concept of closeness to simultaneously represent pairs of graphs. Specifically, consider a pair of graphs, denoted as $\langle G_0, G_1 \rangle$. The pair admits a mutual witness proximity drawing, denoted as $\langle \Gamma_0, \Gamma_1 \rangle$, under the following conditions: (i) $\Gamma_i$ represents $G_i$, and (ii) an edge $(u,v)$ exists in $\Gamma_i$ if and only if there is no vertex $w$ in $\Gamma_{1-i}$ that is ``too close'' to both $u$ and $v$ (where $i=0,1$). Vertex $w$ is called a \emph{witness} and its proximity to $u$ and $v$ impedes the presence of the edge. Clearly, by changing the definition of proximity a pair of graphs may or may not admit a mutual witness proximity drawing.

There is general consensus in the literature to define the closeness of $w$ to both $u$ and $v$ by means of a \emph{proximity region} of $u$ and $v$, which is a convex region in the plane whose area increases when the distance between $u$ and $v$ increases. For example, the \emph{Gabriel region}~\cite{gs-nsagv-69} of $u$ and $v$ is the disk whose diameter is the line segment $\overline{uv}$; the witness $w$ is close to $u$ and $v$ if it is a point of their Gabriel disk. A mutual witness Gabriel drawing of a pair $\langle G_0, G_1 \rangle$ is therefore a pair of drawings $\Gamma_0$ of $G_0$ and $\Gamma_1$ of $G_1$ such that for any two non-adjacent vertices in one drawing their Gabriel disk contains a witness from the other drawing, while for any two adjacent vertices their Gabriel region does not contain any witnesses. \cref{fi:intro-a} shows a mutual witness Gabriel drawing of two caterpillars. As another example, the \emph{relative neighborhood region}~\cite{DBLP:journals/pr/Toussaint80} of $u$ and $v$ is the intersection of the two disks of radius $d(u,v)$ centered at $u$ and $v$, respectively.  \cref{fi:intro-b} depicts a mutual proximity drawing that adopts the relative neighborhood region: The drawing has the same vertex set but fewer edges than the drawing in \cref{fi:intro-a}.

\begin{figure}[!t]
	\centering
		\subcaptionbox{Gabriel drawing\label{fi:intro-a}}[.38\textwidth]{\includegraphics[page=1]{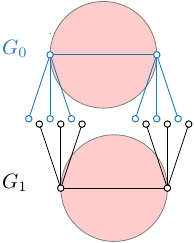}\vspace{2em}}
    \hfil
        \subcaptionbox{Mutual witness relative neighborhood drawing\label{fi:intro-b}}[.58\textwidth]{  \includegraphics[page=2]{images/caterpillar.pdf}}
	\caption{Two mutual witness drawings on the same point set.}
\end{figure}

We want to understand what families of graph pairs admit a mutual witness proximity drawing for a given definition of proximity. Intuitively, the denser the two graphs are, the more likely they admit such a representation: If the graphs are complete, we can draw them sufficiently far apart so that the proximity regions of their edges do not contain any witnesses. On the other hand, when the graphs are sparse there are many non-adjacent vertices requiring the presence of witnesses in their proximity regions, which makes the geometry of the two drawings strongly depend on one another. We specifically study very sparse graphs, namely trees. An outline of our contribution is as follows.

In Section~\ref{se:iso-caterpillars}, we prove that any pair $\langle G_0, G_1 \rangle$ of isomorphic caterpillars admits a mutual witness Gabriel drawing $\langle \Gamma_0, \Gamma_1 \rangle$ such that $\Gamma_0$ and $\Gamma_1$ are linearly separable. This is somewhat surprising as caterpillars are very sparse graphs and the linear separability of mutual witness Gabriel drawings was known only for graphs of small diameter, namely at most two~\cite{DBLP:conf/gd/LenhartL22}.

In Section~\ref{se:iso-trees}, we extend the previous result in two different directions: We consider pairs of general isomorphic trees and we study their drawability for an infinite family of proximity regions called \emph{$\beta$-regions}~\cite{KR-85}, whose shape depends on a parameter $\beta \in \mathbb{R}$. We show that any pair $\langle G_0, G_1 \rangle$ of isomorphic trees admits a mutual witness proximity drawing for any $\beta$-region such that $\beta \in [1,\infty]$. While the two drawings are no longer linearly separable, they have the property that the coordinates of their vertex sets remain the same for any possible value of $\beta$. It is worth recalling that the Gabriel disk is the $\beta$-region for $\beta=1$ and that the relative neighborhood region corresponds to the $\beta$-region for $\beta=2$.

In Section~\ref{se:non-iso-trees}, we investigate the ``robustness'' of the construction of Section~\ref{se:iso-trees}: We show that for some tree pairs, this construction can be modified so that the drawing remains valid even after pruning a suitable set of leaves. While it is known that any two star trees admit a mutual witness Gabriel drawing if and only if the cardinalities of their vertex sets differ by at most two~\cite{DBLP:conf/gd/LenhartL22}, we show that there exist tree pairs which can differ by linearly many leaves and still admit a mutual witness proximity drawing for any $\beta$-region such that $\beta \in [1,\infty]$.

Results marked with a (clickable) ``$\star$'' are proved in the appendix.

\section{Related Work}\label{se:related}

Proximity drawings are a classical research topic in graph drawing; they find application in several areas, including pattern recognition, data mining, machine learning, computational biology, and computational morphology. Proximity drawings have also been used to determine the faithfulness of large graph visualizations. A limited list of references includes~\cite{DBLP:journals/jgaa/EadesH0K17,jt-rngtr-92,DBLP:reference/crc/Liotta13,DBLP:books/wi/OkabeBSCK00,DBLP:reference/cg/ORourkeT17,DBLP:conf/mldm/ToussaintB12}.
 
In the context of designing trained classifiers, mutual witness proximity drawings were first introduced by Ichino and Slansky~\cite{DBLP:journals/pr/IchinoS85} under the name of \emph{interclass rectangle of influence graphs}. In~\cite{DBLP:journals/pr/IchinoS85} the proximity region of a pair of vertices, called the \emph{rectangle of influence}, is the smallest axis-aligned rectangle containing the two vertices. This study was then extended to other families of proximity regions, including the Gabriel region, in a sequence of papers by Aronov et al.~\cite{DBLP:journals/comgeo/AronovDH11,DBLP:journals/comgeo/AronovDH13,DBLP:journals/ipl/AronovDH14,DBLP:journals/gc/AronovDH14}. Notably, in~\cite{DBLP:journals/gc/AronovDH14} it is said that once the combinatorial properties of those pairs of graphs that admit a mutual witness Gabriel drawing are understood, \emph {``we  would have useful tools for the description of the interaction between two point sets''}. Aronov et al. prove in~\cite{DBLP:journals/ipl/AronovDH14} that any pair of complete graphs admits a mutual witness Gabriel drawing where the two drawings are linearly separable. The linear separability property of mutual witness Gabriel  drawings is extended to diameter-2 graphs by Lenhart and Liotta, who also give a complete characterization of those complete bipartite graphs that admit a mutual witness Gabriel drawing~\cite{DBLP:conf/gd/LenhartL22}. Another related contribution of Aronov et al.~\cite{DBLP:journals/comgeo/AronovDH11,DBLP:journals/comgeo/AronovDH13,DBLP:journals/ipl/AronovDH14,DBLP:journals/gc/AronovDH14} is to introduce and study \emph{witness proximity drawings}, which can be shortly described as a relaxation of mutual proximity drawings where one of the two drawings has no edges, independently of whether the proximity regions of its vertices do or do not contain any witnesses. 

\section{Preliminaries}\label{se:preli}

We assume familiarity with basic graph drawing concepts; see e.g.~\cite{DBLP:books/ph/BattistaETT99,DBLP:conf/dagstuhl/1999dg,DBLP:reference/crc/2013gd,DBLP:reference/cg/TamassiaL04}.

 Let $p$ and $q$ be two distinct points in the plane. We denote by $\overline{pq}$ the straight-line segment having $p$ and $q$ as its extreme points. 
We define $\beta$-regions adopting the notation in~\cite{DBLP:journals/jda/BattistaLW06}. A region in the plane is \emph{open} if it is an open set, that is the points on its boundary are not part of the region, and \emph{closed} if all of the points of the boundary are part of the region.
Given a pair $p,q$ of points in the plane and a real number $\beta \in [1,\infty]$, the \emph{open $\beta$-region} of $p$ and $q$, denoted by $R(p,q,\beta)$, is defined as follows. 
For $1 \leq \beta < \infty$, $R(p,q,\beta)$ is the intersection
of the two open disks of radius $\beta d(p,q)/2$ and centered at the points
$(1 - \beta/2)p + (\beta/2)q$ and $(\beta/2)p + (1 - \beta/2)q$.
$R(p,q,\infty )$ is the open infinite strip perpendicular to the line segment $\overline{pq}$ and for $\beta \in [1,\infty]$, the closed $\beta$-region $R[p,q,\beta]$ is simply the open region $R(p,q,\beta)$ along with its boundary; see \cref{fi:beta-regions}.
\begin{figure}[t]
	\centering
	\includegraphics{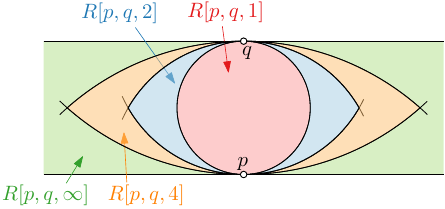}
	\caption{Examples of $\beta$-proximity regions for $\beta \geq 1$.}
	\label{fi:beta-regions}
\end{figure}

Note that $R[p,q,1]$ is the Gabriel region of $p,q$ and that $R(p,q,2)$ is the relative neighborhood region of $p,q$. We shall denote as a  \emph{MW-$[\beta]$ drawing} a mutual witness proximity drawing such that for any two vertices $p$ and $q$ the proximity region is  $R[p,q,\beta]$. In particular, an \emph{MW-$[1]$ drawing} is a mutual witness Gabriel drawing. Similarly, a \emph{MW-$(\beta)$ drawing} is a mutual witness proximity drawing that uses the open $\beta$-region.

As we shall see, some of our constructive arguments produce drawings that are simultaneously \emph{MW-$(\beta)$} and \emph{MW-$[\beta]$} drawings; in this case we refer to them simply as \emph{MW-$\beta$ drawings}. Note that for any pair $p,q$ of vertices in an MW-$\beta$ drawing, $R(p,q,\beta)$ contains a witness if $p$ and $q$ are not adjacent, while $R[p,q,\beta]$ contains no witnesses if $p$ and $q$ are adjacent.

Let $\langle \Gamma_0, \Gamma_1 \rangle$ be an MW-$\beta$ drawing of graphs $\langle G_0,G_1\rangle$ for some value of $\beta$. We say that the drawing is \emph{linearly separable} if there exists a line $\ell$ such that $\Gamma_0$ and $\Gamma_1$ lie in opposite half-planes with respect to $\ell$. 
The following property rephrases an observation of~\cite{DBLP:conf/gd/LenhartL22} and will be used in the proof of \cref{th:caterpillar}.
\begin{property} \label{pr:witnessInStrip}
    Let $\langle \Gamma_0, \Gamma_1 \rangle$ be a linearly separable MW-$[1]$ drawing and let $u$ and $v$ be any two non-adjacent vertices of $\Gamma_i$, for $i=0,1$. Then any witness in  $R[u,v,1]$ is also a point in $R[u,v,\infty]$ 
\end{property}

\section{MW-$[1]$ Drawings of Isomorphic Caterpillars}\label{se:iso-caterpillars}
\begin{figure}[!t]
	\centering
	\includegraphics{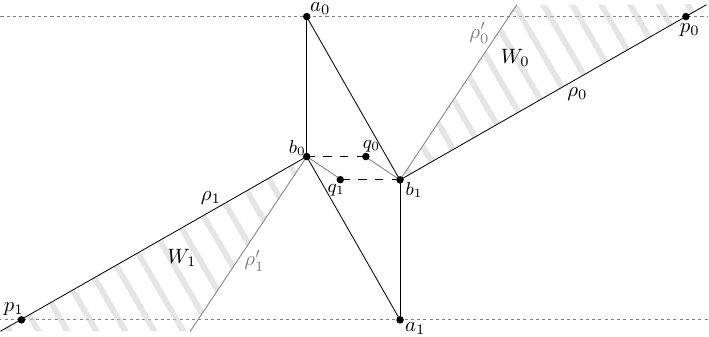}
	\caption{A winged~parallelogram~with~anchors~$q_0,q_1$,~safe~wedges~$W_0,W_1$,~and ports~$p_0,p_1$.}
	\label{fi:winged}
\end{figure} 
A \emph{caterpillar} is a tree $T$ such that, when removing the leaves of $T$ one is left with a non-empty path called \emph{spine} of $T$. We call the graph $K_{1,n}$ with $n \geq 1$ a \emph{star}; if $n>1$ the non-leaf vertex of a star is the \emph{center} of the star, otherwise (i.e., when the star is an edge) either vertex can be chosen as the center.

In this section, we prove that any two isomorphic caterpillars admit a linearly separable MW-[$1$] drawing, that is they admit a linearly separable mutual witness Gabriel drawing. As pointed out both in~\cite{DBLP:journals/ipl/AronovDH14} and in~\cite{DBLP:journals/pr/IchinoS85}, the linear separability of mutual witness proximity drawings is a desirable property because it gives useful information about the inter-class structure of two sets of points.

Let $P=\langle a_0,b_0,a_1,b_1\rangle$ be a parallelogram such that 
$y(a_0) > y(b_0) > y(b_1) > y(a_1)$ and $x(a_0) = x(b_0) < x(a_1) = x(b_1)$. Let $q_0$ and $q_1$ be two points in the interior of $P$ satisfying $y(b_i) = y(q_i)$, $x(q_1) < x(q_0)$, and $x(q_0) - x(b_0) = x(b_1) - x(q_1)$. Let $W_i$ be the wedge with apex $b_i$ not containing any vertex of $P$ other than $b_i$ and defined by two rays $\rho_i,\rho'_i$ such that $\rho_i$ is perpendicular to  $\overline{a_i b_{1-i}}$ and $\rho'_i$ is perpendicular to  $\overline{q_i b_{1-i}}$. We call $W_i$ \emph{safe wedges} of $P$ and the $q_i$  \emph{anchors}. We assume $W_i$ to be an open set. Finally, we identify two \emph{ports}, the points $p_i$, where $p_i$ is the point along $\rho_i$ such that $y(p_i) = y(a_i)$. The parallelogram $P$ together with its anchors, safe wedges, and ports is called a \emph{winged parallelogram} $WP(P, q_0, q_1, W_0, W_1, p_0, p_1)$. \cref{fi:winged} shows an example of a winged parallelogram. The following property is an immediate consequence of the definition of winged parallelogram; see also \cref{fi:property}. 

\begin{property}\label{pr:winged}
    Let $WP(P, q_0, q_1, W_0, W_1, p_0, p_1)$ be a winged parallelogram such that the interior angles at points $a_i$ ($i=0,1$) are at most $\frac{\pi}{4}$. Let $s_i, t_i, z_i$ ($i=0,1$) be any three points such that $s_i \in \overline{b_i q_i}$, $t_i \notin W_i$ with $x(t_0)\ge x(p_0)$, $x(t_1)\le x(p_1)$, and $z_i\in W_i$ with $y(z_i)=y(a_i)$. Then: 
    \begin{enumerate*}[label=(P\arabic*)]
        \item[\addtocounter{enumi}{1}\gr{\labelenumi}]\label[property]{prop:saz} neither $s_{1-i}$ nor $a_{1-i}$ are points of $R[a_i,z_i,1]$; 
        \item[\addtocounter{enumi}{1}\bl{\theenumi}]\label[property]{prop:bst} $b_{1-i} \in R(s_i,t_i,1)$; 
        \item[\addtocounter{enumi}{1}\re{\theenumi}]\label[property]{prop:bat} $b_{1-i} \in R[a_i,t_i,1]$ if $t_i$ on $\rho_i$ and $b_{1-i} \in R(a_i,t_i,1)$ if $t_i$ is not on $\rho_i$.
    \end{enumerate*}
\end{property}

\begin{figure}[!t]
	\centering
	\includegraphics{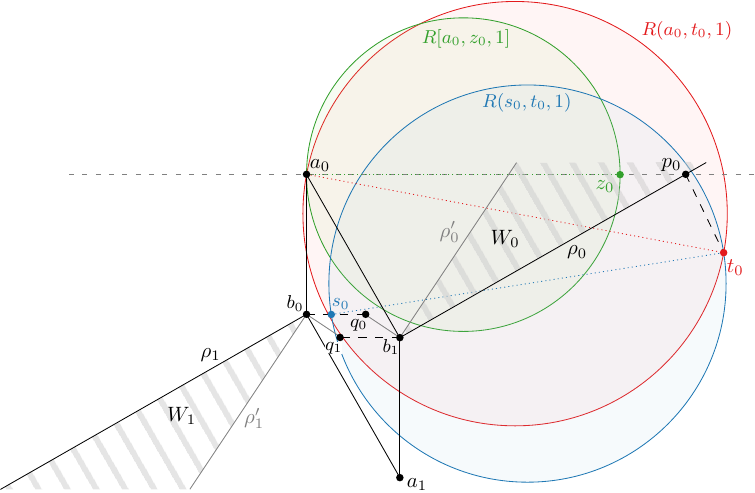}
	\caption{Illustration for \cref{pr:winged}. 
            \gr{(P1)} Neither $a_1$ nor any points of $\overline{q_1 b_{1}}$ are points of $R[a_0,z_0,1]$; 
            \bl{(P2)} $b_1 \in R(s_0,t_0,1)$;
            \re{(P3)} $b_1 \in R(a_0,t_0,1)$.}
	\label{fi:property}
\end{figure}
 We first show how to draw pairs of isomorphic stars into a winged parallelogram and then generalize the construction to pairs of isomorphic caterpillars.   
 
\begin{restatable}[\restateref{le:stars}]{lemma}{LemStars}
\label{le:stars}
Let $\langle T_0,T_1 \rangle$ be a pair of isomorphic stars such that, for $i=0,1$, $T_i$ has root $r_i$ and leaves $v_{i,0}, \ldots, v_{i,k}$. Then $\langle T_0,T_1 \rangle$ admits an MW-$[1]$ drawing $\langle \Gamma_0,\Gamma_1 \rangle$  contained in a winged parallelogram $WP(P, q_0, q_1, W_0, W_1,p_0,p_1)$ such that: (i) $r_i$ is drawn at $a_i$ and the the internal angle of $WP(P, q_0, q_1, W_0, W_1,p_0,p_1)$ at $a_i$ is at most $\frac{\pi}{4}$; (ii) $v_{i,0}$ is drawn at $b_i$; and (iii) for $0 < j \leq k$, $v_{i,j}$ is drawn at an interior point of the segment $\overline{b_i q_i}$.
\end{restatable}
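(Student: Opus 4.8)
The plan is to exhibit one explicit winged parallelogram, chosen tall and thin, together with one explicit placement of the two stars into it, and then to verify the two mutual-witness conditions directly by elementary angle and distance estimates. I would set up coordinates with $b_0=(0,0)$, $b_1=(W,-\delta)$, $a_0=(0,H)$, $a_1=(W,-\delta-H)$, where $W=2k+1$, $\delta=\tfrac12$, and $H$ is a sufficiently large constant (e.g.\ $H=2W^2$). This $P$ is a parallelogram whose two sides $\overline{a_0b_0}$ and $\overline{a_1b_1}$ are vertical, it meets all the ordering requirements of a winged parallelogram, and its interior angle at each $a_i$ equals $\arctan\big(W/(H+\delta)\big)\le\pi/4$. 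I would take $q_0$ just left of the right end of the segment of $P$ on the line $y=0$ and put $q_1$ at its image under the central symmetry $\sigma\colon(x,y)\mapsto(W-x,\,-\delta-y)$; the safe wedges and ports are then forced. Finally I would draw $r_i$ at $a_i$, put the $k+1$ leaves of $T_0$ on $\overline{b_0q_0}$ at the even abscissas $0,2,\dots,2k$ (so $v_{0,0}=b_0$), and the $k+1$ leaves of $T_1$ on $\overline{b_1q_1}$ at the odd abscissas $1,3,\dots,2k+1$ (so $v_{1,0}=b_1$). Choosing $q_0$ close enough to the right end makes every $v_{i,j}$ with $j>0$ interior to $\overline{b_iq_i}$, so (i)--(iii) hold; and since $\sigma$ maps $\Gamma_0$ onto $\Gamma_1$, it suffices to check the mutual-witness conditions for $\Gamma_0$.

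For the non-edges of $\Gamma_0$ (its leaf pairs): two leaves $s,s'$ of $\Gamma_0$ lie on the line $y=0$, so their Gabriel region is the disk of radius $|x(s)-x(s')|/2$ about their midpoint, and the interleaving produces a leaf $w$ of $\Gamma_1$ whose abscissa equals that midpoint (or is one unit off it, which can happen only when $|x(s)-x(s')|\ge 4$); since $w$ lies only $\delta$ below $y=0$, a short parity case analysis shows $w$ is in the open Gabriel disk as soon as $\delta<1$. For the edges $\overline{a_0s}$ of $\Gamma_0$ ($s$ a leaf): I would show that every vertex $p$ of $\Gamma_1$ sees $\overline{a_0s}$ under an acute angle, so by Thales $p\notin R[a_0,s,1]$. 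Indeed every vertex of $\Gamma_1$ lies on the line $y=-\delta$ or at $a_1$, hence strictly below both $a_0$ and $s$, and $b_1$ and $a_1$ lie strictly right of every leaf of $\Gamma_0$, so from each of these points both $a_0$ and $s$ lie up-and-to-the-left and the angle is acute; the only remaining case, an interior leaf $p$ of $\Gamma_1$ lying left of the particular leaf $s$, reduces to the inequality $(H+\delta)\delta>x(p)\,(x(s)-x(p))$, whose right side is at most $W^2$, so the choice $H\ge 2W^2$ closes it. All the inequalities obtained are strict, so the drawing is a valid MW-$[1]$ drawing (in fact an MW-$1$ drawing).

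The main obstacle is the tension between these two verifications. Forcing a witness into the Gabriel disk of every pair of same-drawing leaves makes it essential that the leaves of $\Gamma_0$ and $\Gamma_1$ interleave in the $x$-coordinate and that the two horizontal leaf-lines be close (small $\delta$); but the interleaving drives the leaves of the two drawings into overlapping $x$-ranges, so a priori a leaf of $\Gamma_1$ could fall inside the Gabriel disk of an edge $\overline{a_0s}$ and destroy that edge. The way out is to make $P$ very tall (large $H$, i.e.\ the interior angle at each $a_i$ far below $\pi/4$, although the lemma only requires $\le\pi/4$): then each edge-disk is long, nearly vertical and thin, and the entire opposite drawing --- crammed near the two almost-coincident lines $y=0$, $y=-\delta$ together with the far corner $a_1$ --- stays outside it. Once this balance is fixed, checking that $q_0,q_1$ can indeed be placed interior to $P$ with all the required leaves interior to $\overline{b_iq_i}$ is routine bookkeeping.
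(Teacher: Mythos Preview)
Your proposal is correct and takes essentially the same approach as the paper's proof: both interleave the leaves with spacing~$2$ on two nearby horizontal lines, place each root vertically far above/below its leftmost/rightmost leaf, exploit the point symmetry to halve the verification, and check the two MW-$[1]$ conditions by the same elementary angle/distance estimates. The only differences are cosmetic parameter choices (your $\delta=\tfrac12$ and $H=2(2k+1)^2$ versus the paper's vertical gap~$1$ and root height $2k^2$, and the paper treats $k=0$ by a separate ad-hoc placement while your construction handles it uniformly).
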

\begin{sketch}
For $i=0,1$, if $T_i$ has only one leaf, the construction is trivial; see \cref{fi:stars-zero}.
Otherwise, we draw the leaves of $T_i$ uniformly spaced along a horizontal segment $\sigma_i$ 
and then place $\sigma_0$ and $\sigma_1$ relative to each other so that for every pair of consecutive leaves of $T_i$, there is a witness for that pair among the leaves of $T_{1-i}$; see \cref{fi:stars-leaves}.

 The horizontal line midway between $\sigma_0$ and $\sigma_1$ will form a separating line for $\langle \Gamma_0, \Gamma_1 \rangle$ once the centers $r_i$ of $T_i$ are placed.
 The center $r_0$ of $T_0$ is then placed vertically above the leftmost leaf of $T_0$ and the center $r_1$ of $T_1$ is placed vertically below the rightmost leaf of $T_1$, each center far enough from the separating line so that for $i=0,1$ and $0\leq j \leq k$, no proximity region $R[r_i,v_{i,j},1]$ contains any witness from $T_{1-i}$; see \cref{fi:stars-root}.
\end{sketch}

\begin{figure}[!t]
	\centering
 
    \begin{minipage}[b][5.3cm]{.4\columnwidth}
	 \subcaptionbox{Drawing for $k=0$\label{fi:stars-zero}}[\linewidth]{\includegraphics[page=5]{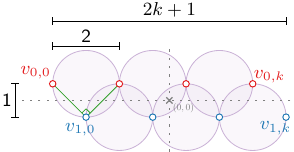}}
      \vfill
	 \subcaptionbox{Placement of the leaves\label{fi:stars-leaves}}{\includegraphics[page=1]{images/stars-construction.pdf}}
    \end{minipage}
    \hfill
    \subcaptionbox{Placement of the root\\ (distorted for readability)\label{fi:stars-root}}{~\includegraphics[page=2]{images/stars-construction.pdf}~}
    \hfill
    \subcaptionbox{Drawing for $k{=}3$\label{fi:stars-drawing}}{\includegraphics[page=4]{images/stars-construction.pdf}}
	\caption{Illustration for the Proof of \cref{le:stars}.}
	\label{fi:stars-construction}
\end{figure}

In the following we call an MW-$[1]$ drawing $\langle \Gamma_0,\Gamma_1 \rangle$  of two isomorphic stars computed as in the proof of \cref{le:stars} a \emph{WP-drawing on $P$} and say that the winged parallelogram \emph{supports} the drawing; see \cref{fi:stars}. Note that, by construction, the horizontal line $L$ having $y(L) = (y(b_0) + y(b_1))/2$ is a separating line for the WP-drawing of two isomorphic stars. 

\begin{figure}[t]
	\centering
	\includegraphics{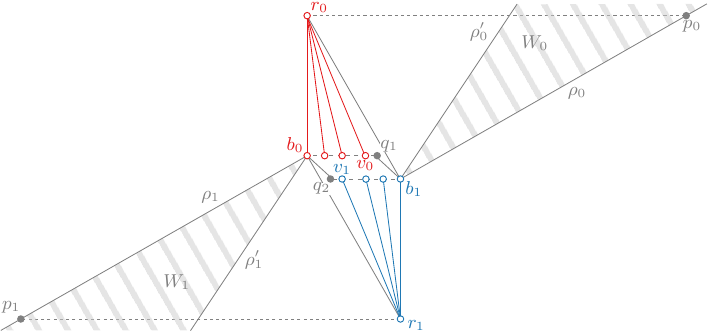}
	\caption{A WP-drawing of two isomorphic stars on a parallelogram $P$}
	\label{fi:stars}
\end{figure}

\begin{lemma}\label{le:smaller-stars}
    Let $\langle \Gamma_0,\Gamma_1 \rangle$ be a WP-drawing of two isomorphic stars $\langle T_0,T_1 \rangle$ and let $P$ be the winged parallelogram that supports $\langle \Gamma_0,\Gamma_1 \rangle$. Then, any pair $\langle T'_0,T'_1 \rangle$ of isomorphic stars with at least one leaf and $T'_i \subset T_i$ has a WP-drawing on~$P$.
\end{lemma}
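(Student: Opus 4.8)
The plan is to keep every vertex at a position it already occupies in $\langle\Gamma_0,\Gamma_1\rangle$ and to delete leaves only. Since $T'_i\subseteq T_i$ and $T'_0\cong T'_1$, each $T'_i$ is a star with $k'+1$ leaves and $1\le k'+1\le k+1$; we may assume $k'\ge 1$, the case $k'=0$ (two disjoint edges) being immediate. Recall from the proof of \cref{le:stars} that in $\langle\Gamma_0,\Gamma_1\rangle$ the leaves $v_{i,0}=b_i,v_{i,1},\dots,v_{i,k}$ of $T_i$ are placed in this order, evenly spaced, along $\overline{b_iq_i}$, that the two leaf-segments are offset from one another so that every pair of consecutive leaves of $T_i$ is witnessed by a leaf of $T_{1-i}$, and that the vertical distance between the segments is less than half the leaf-spacing; from this one checks that the $j$-th consecutive pair of $T_i$ is witnessed by the leaf $v_{1-i,\,k-j}$ of $T_{1-i}$. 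I would then let $\Gamma'_0$ keep the root at $a_0$ and the leaves $v_{0,0},v_{0,k-k'+1},v_{0,k-k'+2},\dots,v_{0,k}$ (that is, $b_0$ together with the $k'$ leaves nearest $q_0$), and let $\Gamma'_1$ keep the root at $a_1$ and the leaves $v_{1,0},v_{1,1},\dots,v_{1,k'}$ ($b_1$ together with the $k'$ leaves nearest $b_1$). Since every vertex of $\Gamma'_i$ lies at a vertex position of $\Gamma_i$, the new drawing is contained in $P$, satisfies conditions (i)--(iii) of \cref{le:stars} (the internal angle at $a_i$ is inherited, the root sits at $a_i$ and one leaf at $b_i$, and the remaining leaves lie in the interior of $\overline{b_iq_i}$), and is separated by the line $L$. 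It remains to check the MW-$[1]$ conditions.

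The edges of $\Gamma'_i$ are root--leaf edges, each an edge of $\Gamma_i$ drawn identically; since $\Gamma'_{1-i}$ occupies a subset of the vertex positions of $\Gamma_{1-i}$, the Gabriel disk $R[a_i,v,1]$ of such an edge contains no vertex of $\Gamma'_{1-i}$ because it already contained none of $\Gamma_{1-i}$. (This is exactly where keeping the original positions pays off: had we re-spaced the leaves, we would instead have to re-prove that the Gabriel disks of the root--leaf edges avoid the relocated vertices of the other drawing, which would be the delicate step.) The non-edges of $\Gamma'_i$ are the pairs of leaves; here I first note that for collinear points $p,q,r,s$ in this order one has $R[p,s,1]\supseteq R[q,r,1]$ (a short computation on the two disks), so the Gabriel disk of any pair of leaves of $T'_i$ contains that of each of its consecutive sub-pairs and inherits a witness from any one of them. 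Every consecutive pair of $T'_1$ is a consecutive pair $(v_{1,m},v_{1,m+1})$ of $T_1$ with $m\le k'-1$, witnessed by $v_{0,k-m}\in\{v_{0,k-k'+1},\dots,v_{0,k}\}$, a kept leaf of $\Gamma'_0$; likewise every consecutive pair of $T'_0$ except $(v_{0,0},v_{0,k-k'+1})$ is a consecutive pair $(v_{0,j},v_{0,j+1})$ of $T_0$ with $j\ge k-k'+1$, witnessed by $v_{1,k-j}\in\{v_{1,0},\dots,v_{1,k'}\}$.

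The only pair not already present in $\langle\Gamma_0,\Gamma_1\rangle$ is the ``long'' consecutive pair $(v_{0,0},v_{0,k-k'+1})$ of $T'_0$, and the one substantive point --- a single explicit computation rather than a real obstacle --- is that the outermost kept leaf $v_{1,k'}$ of $\Gamma'_1$ lies in its Gabriel disk. Writing $s=k-k'\ge0$ and measuring in units of the leaf-spacing, the offset of the two leaf-segments places $v_{1,k'}$ horizontally at distance $s/2$ from the midpoint of $\overline{v_{0,0}\,v_{0,k-k'+1}}$, whose half-length is $(s+1)/2$, and vertically below it by less than $1/2$; since $\big(\tfrac{s}{2}\big)^2+\big(\tfrac{1}{2}\big)^2\le\big(\tfrac{s+1}{2}\big)^2$, the point lies in the open Gabriel disk, the strict inequality in the vertical distance also absorbing the small slack in the offset. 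This shows $\langle\Gamma'_0,\Gamma'_1\rangle$ is a WP-drawing of $\langle T'_0,T'_1\rangle$ on $P$. I expect the only care needed to be the bookkeeping of this ``anti-diagonal'' witness pattern and the final computation.
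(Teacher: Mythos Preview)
Your approach is genuinely different from the paper's. The paper does \emph{not} keep the old leaf positions: it removes the unwanted leaves and then \emph{re-spaces} the survivors uniformly along the same segments $\sigma_i$, essentially rerunning the construction of \cref{le:stars} with fewer leaves inside the given parallelogram. Your idea of leaving every surviving vertex where it was is more economical in one respect---the root--leaf Gabriel disks are inherited verbatim, so no re-verification is needed for the edges---whereas the paper silently relies on the fact that the roots were placed high enough that \emph{any} leaf on $\sigma_i$ gives an empty Gabriel disk.

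There is, however, a real gap in your argument. The step ``from this one checks that the $j$-th consecutive pair of $T_i$ is witnessed by $v_{1-i,\,k-j}$'' is not forced by the hypotheses you list: with even spacing and vertical gap at most half the spacing, the witness of $(v_{0,j},v_{0,j+1})$ is either $v_{1,k-j}$ or $v_{1,k-j-1}$, depending on which way the two leaf rows are offset. In the construction of \cref{le:stars} the bottom row is shifted \emph{left} (by point symmetry $b_1$ sits one unit left of $v_{0,k}$), so the witness is $v_{1,k-j-1}$, not $v_{1,k-j}$. With that correction your selection breaks: the consecutive pair $(v_{1,k'-1},v_{1,k'})$ of $T'_1$ needs the top leaf $v_{0,k-k'}$, which you have deleted, and no other point of $\Gamma'_0$ lies in that (radius-one) Gabriel disk. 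Your long-gap computation likewise uses the wrong offset; with the correct one the horizontal distance is $|s-2|/2$, not $s/2$. The fix is easy---swap which tree carries the long gap, i.e., keep $v_{0,0},\dots,v_{0,k'}$ for $T'_0$ and $v_{1,0},v_{1,k-k'+1},\dots,v_{1,k}$ for $T'_1$---but as written the bookkeeping does not go through. (A smaller point: the vertical gap in the paper's construction equals, rather than is strictly less than, half the spacing; this is harmless for the closed region $R[\cdot,\cdot,1]$ but your appeal to strict inequality is unfounded.)
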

\begin{proof}
    Let $r_i$ be the root of $T_i$ and $v_{i,0},\ldots,v_{i,k}$ be the leaves of $T_i$. Consider the drawing $\langle \Gamma_0,\Gamma_1\rangle$ computed in \cref{le:stars}; see \cref{fi:stars-construction}. We use the same notation as in the proof of \cref{le:stars}.
    Remove all leaves $v_{i,j}$ that are not in $T_i'$ and reposition the remaining leaves uniformly along $\sigma_i$ as in the proof of \cref{le:stars}.
        
    By construction, the Gabriel region $R[v_{0,i},v_{0,j},1]$ for every $v_{0,i},v_{0,j}\in T_i, 1\le i<j\le k$ still contains the vertex $v_{1,i}$, while the Gabriel region $R[v_{1,i},v_{1,j},1]$ for every $v_{1,i},v_{1,j}\in T_i, 1\le i<j\le k$ still contains the vertex $v_{1,j}$. 
    Otherwise, if $v_{i,0}\notin T'_i$, then take any leaf $v_{i,j}\in T'_i$, switch its position with $v_{i,0}$ in $\Gamma_i$, and then proceed as above.
\end{proof}

\begin{theorem}\label{th:caterpillar}
Any pair $\langle T_0,T_1\rangle$ of isomorphic caterpillars admits a linearly separable MW-$[1]$ drawing.
\end{theorem}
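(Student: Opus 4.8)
The plan is to bootstrap the star construction of \cref{le:stars} up to caterpillars by cutting each caterpillar into stars along its spine and threading the winged parallelograms that host those stars together through their ports. Fix an isomorphism $\phi\colon T_0\to T_1$ and let $s_{i,1},\dots,s_{i,m}$ be the spine of $T_i$, indexed so that $\phi(s_{0,j})=s_{1,j}$. For each $j$ let $S_{i,j}$ be the star with center $s_{i,j}$ whose leaves are the leaves of $T_i$ incident to $s_{i,j}$; then $S_{0,j}\cong S_{1,j}$. If $m=1$ the caterpillar is a star and the claim is exactly \cref{le:stars}, so assume $m\ge 2$; for ease of exposition assume also that every spine vertex has at least one leaf (a leafless internal spine vertex is simply a single point placed inside the relevant safe wedge, and the arguments below specialise to it without trouble).

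I would build $\langle\Gamma_0,\Gamma_1\rangle$ one star at a time. Draw $\langle S_{0,1},S_{1,1}\rangle$ with \cref{le:stars} inside a winged parallelogram $P_1$ whose internal angle at each root corner $a_i^{(1)}$ is at most $\pi/4$, placing $s_{i,1}$ at $a_i^{(1)}$; then the horizontal line $L$ midway between $b_0^{(1)}$ and $b_1^{(1)}$ separates the two sides. Inductively, having placed $S_{i,1},\dots,S_{i,j}$ in winged parallelograms $P_1,\dots,P_j$ glued along their ports, attach a new, suitably scaled-down winged parallelogram $P_{j+1}$ supporting a WP-drawing of $\langle S_{0,j+1},S_{1,j+1}\rangle$ — using \cref{le:smaller-stars} when $S_{i,j+1}$ has fewer leaves than the template — so that, for $i=0,1$: (i) the next spine vertex $s_{i,j+1}$ lands strictly inside the safe wedge $W_i^{(j)}$ at height $y(a_i^{(j)})$, so that $\overline{s_{i,j}s_{i,j+1}}$ is a short, nearly horizontal segment lying far above (for $i=0$), respectively far below (for $i=1$), $L$; (ii) every other vertex of $S_{i,j+1}$, and every vertex of every later star $S_{i,j'}$ with $j'>j+1$, lies outside $W_i^{(j)}$ with $x$-coordinate past the port $p_i^{(j)}$; and (iii) $P_{j+1}$ is small enough and positioned so that all of $\Gamma_0$ stays above $L$ and all of $\Gamma_1$ below it. Such placements exist because the safe wedges leave room for~(i), the ports leave room for~(ii), and scaling $P_{j+1}$ down shrinks all of its interior proximity regions.

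It then remains to verify the MW-$[1]$ condition for every vertex pair of $T_i$ against the witnesses in $\Gamma_{1-i}$. Edges inside one star $S_{i,j}$ have no witness in $S_{1-i,j}$ by \cref{le:stars}, and a spine edge $\overline{s_{i,j}s_{i,j+1}}$ has no witness in $S_{1-i,j}$ by \cref{pr:winged} (no vertex of $S_{1-i,j}$ lies in $R[s_{i,j},s_{i,j+1},1]$, taking $z_i=s_{i,j+1}$); no other star $S_{1-i,j'}$ can witness either type of edge since it lies across $L$, at distance at least the fixed quantity $\mathrm{dist}(a_i^{(j)},L)$, which the controllably small proximity regions of these edges never reach. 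For the non-edges: two leaves of one star are witnessed as in \cref{le:stars}; a leaf of $S_{i,j}$ and the center $s_{i,j+1}$ are witnessed by a leaf of $S_{1-i,j}$ via the same spacing that made the star drawing valid; and any remaining non-edge has its second endpoint outside $W_i^{(j)}$ and past $p_i^{(j)}$ for the appropriate $j$, hence that endpoint is a legal choice of $t_i$ in \cref{pr:winged} and the non-edge is witnessed by $b_{1-i}^{(j)}$ (by the part of \cref{pr:winged} about $R(s_i,t_i,1)$ if the first endpoint is a leaf of $S_{i,j}$, and by the part about $R[a_i,t_i,1]$ if it is $s_{i,j}$). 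Linear separability is maintained by~(iii) throughout, and \cref{pr:witnessInStrip} is invoked to confine the possible witnesses in this linearly separable drawing to the portion of the relevant infinite strip on the opposite side of $L$, which is what keeps all of these witness/no-witness checks local to neighbouring parallelograms.

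The main obstacle is the global consistency of the chained placement. One must realise the winged parallelograms $P_1,\dots,P_m$ so that, simultaneously, each new spine vertex lands inside the previous safe wedge at exactly the right height, every downstream vertex — not merely those of the immediately following star — stays outside the previous safe wedge and past the previous port, and the single fixed line $L$ keeps separating the two drawings, all while keeping the ``internal angle at $a_i$ at most $\pi/4$'' hypothesis needed by \cref{le:stars} and \cref{pr:winged}. Checking that these demands remain jointly satisfiable over all $m$ steps — and in particular that \cref{pr:winged} genuinely applies to the long-range non-edges between stars many indices apart — is the technical heart of the argument; once the geometry of the chain is fixed, what is left is distance bookkeeping built on \cref{le:stars}, \cref{le:smaller-stars}, \cref{pr:winged}, and \cref{pr:witnessInStrip}.
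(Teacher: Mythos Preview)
Your overall strategy---decompose each caterpillar into stars along the spine, realise each star pair in a winged parallelogram via \cref{le:stars} and \cref{le:smaller-stars}, and chain the parallelograms so that \cref{pr:winged} handles the cross-star non-edges---is the paper's approach, and your use of the lemmas is the intended one. The paper likewise places each successive spine vertex at the previous port and then perturbs slightly leftward into the safe wedge, which amounts to your ``strictly inside the safe wedge'' placement done in two steps rather than one.

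The gap is the ``scaled-down'' parallelograms. Your condition~(i) pins $s_{i,j+1}=a_i^{(j+1)}$ at height $y(a_i^{(j)})$ for \emph{both} $i=0,1$, so every $P_{j+1}$ has the same vertical extent $|y(a_0^{(j)})-y(a_1^{(j)})|$ as $P_j$ and cannot be a scaled copy; your condition~(iii) and the later appeal to ``controllably small proximity regions'' both lean on shrinking that is geometrically impossible under~(i). The paper resolves this by taking all $P_j$ \emph{congruent}, sized once and for all for the spine vertex of maximum degree, and invoking \cref{le:smaller-stars} to seat every smaller star in that common template. This congruence is precisely what dissolves the global-consistency worry you flag in your last paragraph: with identical parallelograms the safe wedges, ports, and leaf heights line up at every index, so any vertex of $S_{i,j'}$ with $j'>j+1$ automatically satisfies the $t_i$-hypotheses of \cref{pr:winged} relative to $P_j$, and $b_{1-i}^{(j)}$ witnesses the long-range non-edge without further bookkeeping. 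Replace ``suitably scaled-down'' by ``congruent to the maximal template'' and your argument coincides with the paper's; as written, the chain you describe is not realisable.
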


\begin{proof}
For $i = 0,1$, if each $T_i$ is a path, the pair can easily be realized by two horizontal paths, such that corresponding vertices of $T_0$ and $T_1$ have the same $x$-coordinates, all edges have the same length and the $y$-distance between $T_0$ and $T_1$ is at most the edge length. So we can assume that the spine of $T_i$ is a path such that at least one spine vertex has degree greater than two.

Let $r_{i,0}, \dots, r_{i,k}$ be the spine vertices of $T_i$ in the order that they appear along the spine.  Decompose $T_i$ into subtrees $T_{i,0}, \ldots T_{i,k}$, having roots $r_{i,0}, \ldots, r_{i,k}$ respectively. Note that each $\langle T_{0,j},T_{1,j} \rangle$ is either an isomorphic pair of stars with centers $r_{0,j}$ and $r_{1,j}$, respectively, or it is a pair of isolated vertices.

Let $h$ be an index such that $r_{i,h}$ is a vertex of highest degree in $T_i$. Compute a WP-drawing $\langle \Gamma_{0,h},\Gamma_{1,h} \rangle$ of  $\langle T_{0,h},T_{1,h} \rangle$ by means of \cref{le:stars} and let $WP_h = WP(P_h, q_{0,h}, q_{1,h}, W_{0,h}, W_{1,h}, p_{0,h}, p_{1,h})$ be the winged parallelogram that supports the drawing.
Let $N=y(r_{0,h})$ and $S= y(r_{1,h})$ and let $L_0$ and $L_1$ be the two horizontal lines at heights $N$ and $S$, respectively. We will construct a  MW-$[1]$ drawing of the two caterpillars such that all spine vertices of $T_i$ lie on $L_i$ and such that the horizontal line $L$ at height $(N+S)/2$ separates $T_0$ from $T_1$.

For any $0 \leq j \leq k$, $j\neq h$ such that $r_{i,j}$ has at least one leaf, we use \cref{le:smaller-stars} to compute a WP-drawing $\langle \Gamma_{0,j},\Gamma_{1,j} \rangle$ of $\langle T_{0,j},T_{1,j} \rangle$ in a winged parallelogram $WP_j$ congruent to $WP_h$ that will be placed so that $r_{i,j}$ lies on $L_i$.
For any $0 \leq j \leq k$, $j\neq h$ such that  $r_{i,j}$ has no children, we will place $r_{i,j}$ on $L_i$
so that the line through $r_{0,j}, r_{1,j}$ is perpendicular to the line through $r_{1,h}, p_{0,h}$. 

We now describe how to place each pair $\langle \Gamma_{0,j}, \Gamma_{1,j} \rangle$; note that placing $r_{0,j}$ completely determines the placement of $\langle \Gamma_{0,j}, \Gamma_{1,j} \rangle$.
Vertex $r_{0,0}$ can be placed arbitrarily along $L_0$.
Assume now that, for some $j \geq 0$, the pairs
$\lbrace r_{0,0}, r_{1,0} \rbrace, \ldots
\lbrace r_{0,j}, r_{1,j} \rbrace$ have been placed along $L_0$ and $L_1$. We describe how to place $r_{0,j+1}$. There are three cases; see \cref{fi:lin-sep-caterpillar}:
\begin{enumerate*}[label=(\arabic*)]
\item\label[case]{case:caterpillar-1leaf} If $r_{0,j}$ has at least one leaf, place $r_{0,j+1}$ at port $p_{0,j}$.
\item\label[case]{case:caterpillar-01leaf} If $r_{0,j}$ has no leaves and $r_{0,j+1}$ has at least one leaf, place $r_{0,j+1}$ so that
$r_{1,j}$ is at port $p_{1,j+1}$.
\item\label[case]{case:caterpillar-00leaf} If both $r_{0,j}$ and $r_{0,j+1}$ have no leaves, place $r_{0,j+1}$ at the intersection of $L_0$ with the line through $r_{1,j}$ that is perpendicular to  $\overline{r_{0,j} r_{1,j}}$.
\end{enumerate*}

\begin{figure}[!t]
	\centering
	\subcaptionbox{\cref{case:caterpillar-1leaf}\label{fi:lin-sep-caterpillar-case1}}{\includegraphics[page=1]{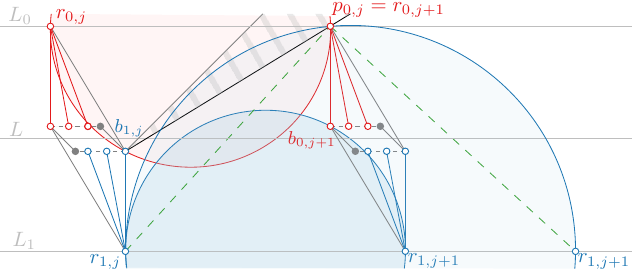}}

    \medskip
 
	\subcaptionbox{\cref{case:caterpillar-01leaf,case:caterpillar-00leaf}\label{fi:lin-sep-caterpillar-case2}}{\includegraphics[page=2]{images/lin-sep-caterpillar.pdf}}
	\caption{Illustration for the Proof of \cref{th:caterpillar}.}
    \label{fi:lin-sep-caterpillar}
\end{figure}

This construction is \textit{almost} an MW-$[1]$ drawing of $\langle T_0, T_1 \rangle$. Consider the mutual witness Gabriel drawing $\Gamma$ induced by the placement of the vertices of $\langle T_0, T_1 \rangle$ described above.
Note that in our constructed drawing:
\begin{enumerate*}[label=(\roman*)]
\item The pairs $\langle T_{0,j}, T_{1,j} \rangle$
are drawn in vertically disjoint strips and by \cref{pr:witnessInStrip} form MW-$[1]$ drawings of those pairs. 

\item For any non-spine vertex $u_{0,j} \in T_{0,j}$, and \textit{any} vertex $u_{0,t} \in T_{0,t}$ ($0\leq j < t \leq k$), by \cref{prop:bst}~{\bl (P2)}, $b_{1,j} \in R(u_{0,j},u_{0,t},1)$ and so the pair $\lbrace u_{0,j}, u_{0,t} \rbrace$ is not an edge in $\Gamma$. 
\item For any spine vertex $r_{0,j} \in T_{0,j}$, and non-spine vertex $u_{0,t} \in T_{0,t}$ ($0 \leq j < t \leq k$), either $r_{0,j}$ has a leaf, and so by \cref{prop:bat}~{\re (P3)}, $b_{1,j} \in R(r_{0,j},u_{0,t},1)$ or
$r_{0,j}$ has no leaves and $r_{1,j}\in R[r_{0,j},u_{0,t},1]$ by the construction described above.
\end{enumerate*}
Similar statements hold for pairs of vertices in $T_1$ by the symmetry of the construction.

The drawing $\Gamma$ is not yet an MW-$[1]$ drawing  of $\langle T_0, T_1 \rangle$ because %
there are \textit{no} edges in $\Gamma$ between \textit{any} pair of consecutive spine vertices of $T_i$. This problem can be easily rectified, however. Note that in $\Gamma$ there are only two types of non-adjacent vertex pairs that only have witnesses on the boundaries of their Gabriel regions (that is, that only have witnesses forming right angles), namely,  consecutive leaves in an individual subtree $T_{i,j}$,  and consecutive spine vertices  in $T_i$.
Let $r_{i,j}$ and $r_{i,j+1}$ be any two consecutive spine vertices of $T_i$. We can always perturb $\Gamma$ so that by very slightly moving to the left all vertices of $\langle T_{0,j+1}, T_{1,j+1} \rangle$, we have, by \cref{prop:saz}~{\gr (P1)}, that $R[r_{i,j},r_{i,j+1},1]$  contains no witnesses while for every other pair of non-adjacent vertices their Gabriel regions still contain a witness. Once all spine vertices have been properly connected, the resulting drawing is a linearly separable MW-$[1]$ drawing of $\langle T_0, T_1 \rangle$.
\end{proof}

\section{MW-$\beta$ Drawings of Isomorphic Trees}\label{se:iso-trees}

In this section we show that, at the expense of losing linear separability, the result of~\cref{th:caterpillar} can be extended to any two isomorphic trees and to any mutual witness proximity drawing that adopts either the open or the closed $\beta$-region for all values of $\beta \in [1,\infty]$. A nice property of our algorithm is that it does not depend on the exact choice of $\beta$, i.e., it produces a single drawing that is an MW-$\beta$ proximity drawing for every $\beta\ge 1$.

Similar to the previous section, we show a construction to recursively draw subtrees inside suitable parallelograms, which are however not winged parallelograms.  We start by defining these parallelograms. In the remainder of the section, we shall sometimes assume that our trees are rooted, in which case we denote as $(T,r)$ a tree $T$ with root $r$.

Let $P = \langle a_0,b_0,a_1,b_1 \rangle$ be a parallelogram where $\overline{a_0a_1}$ is the longer diagonal and no angle is equal to $\frac{\pi}{2}$. We say that $P$ is \emph{nicely oriented} if $y(a_0) > y(b_1) > y(b_0) > y(a_1)$ and $x(a_0) < x(b_0) < x(b_1) < x(a_1)$; see \cref{fig:nicely-oriented}. 

Let $\langle (T_0, r_0), (T_1, r_1) \rangle$ be a pair of isomorphic rooted trees with $n$ vertices each. An \emph{MW-$\beta$ parallelogram drawing} of $\langle (T_0, r_0), (T_1, r_1)\rangle$ is an MW-$\beta$ proximity drawing $\langle \Gamma_0, \Gamma_1\rangle$ contained in a nicely oriented parallelogram $P = \langle a_0,b_0,a_1,b_1 \rangle$ such that, for $i=0,1$, the following holds: 
\begin{enumerate*}[label=(\roman*)]
	\item point $a_i$ represents the root $r_i$ of $T_i$; 
    \item if $n>0$, point $b_i$ represents a vertex of $T_i$ adjacent to $r_i$;
    \item for every other vertex $v_i \in \Gamma_i$ such that $v_i$ is neither the root of $T_i$ nor the vertex at $b_i$, we have $y(b_1) > y(v_i) > y(b_0)$;
    \item no edge of $\Gamma_i$ is a vertical segment.
\end{enumerate*}
\cref{fig:parallelogram-drawing} shows an example of an MW-1 parallelogram drawing.
 
\begin{figure}[t]
    \centering
    \subcaptionbox{\label{fig:nicely-oriented}}{\includegraphics[page=1]{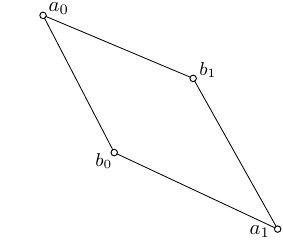}}
    \hfil
    \subcaptionbox{\label{fig:parallelogram-drawing}}{\includegraphics[page=2]{images/nicely-oriented.pdf}}
    \caption{(\subref{fig:nicely-oriented}) A nicely oriented parallelogram; (\subref{fig:parallelogram-drawing}) an MW-$1$ parallelogram drawing.}
\end{figure}

\begin{theorem}\label{th:main}
Any two isomorphic trees $\langle T_0, T_1 \rangle$ admit a parallelogram drawing that is an MW-$\beta$-drawing for all $\beta \in [1, \infty]$.
\end{theorem}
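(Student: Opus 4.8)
The plan is to prove Theorem~\ref{th:main} by strengthening the statement to an inductive claim about MW-$\beta$ parallelogram drawings: for every pair $\langle (T_0,r_0),(T_1,r_1)\rangle$ of isomorphic rooted trees and every ``sufficiently generic'' nicely oriented parallelogram $P$, one can place the two trees inside $P$ so as to obtain an MW-$\beta$ parallelogram drawing simultaneously valid for all $\beta\in[1,\infty]$. The theorem then follows by rooting $T_0$ and $T_1$ at corresponding arbitrary vertices and invoking the claim on any one nicely oriented parallelogram. Working with a parallelogram-based invariant is what makes the recursion self-similar: each subtree will be drawn inside its own congruent (or scaled) nicely oriented parallelogram, placed so its root sits where the parent expects a child.

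First I would set up the recursion on the children of the roots. Let $r_i$ have children $c_{i,1},\dots,c_{i,d}$ with corresponding subtrees $S_{i,1},\dots,S_{i,d}$, where isomorphism guarantees $S_{0,m}\cong S_{1,m}$ for each $m$. Following the nicely-oriented convention, $r_0$ is placed at the top-left corner $a_0$ of $P$ and $r_1$ at the bottom-right corner $a_1$; the first child $c_{i,1}$ is placed at $b_i$. For the remaining children I would carve out $d-1$ thin parallelograms, each a scaled-down nicely oriented copy, stacked along a curve (or near-collinearly) between $b_0$ and $b_1$ so that condition (iii) of the parallelogram-drawing definition holds — all non-root, non-$b_i$ vertices lie strictly in the horizontal strip $y(b_1)>y>y(b_0)$. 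Inside the $m$-th such parallelogram I recursively place $\langle (S_{0,m},c_{0,m}),(S_{1,m},c_{1,m})\rangle$. Edges $(r_i,c_{i,m})$ are drawn as straight segments; genericity of the corner angles and of the child placements ensures no such edge is vertical (condition (iv)).

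The heart of the argument is verifying the mutual-witness condition for all $\beta$ at once, and this is where I expect the main obstacle to lie. There are three kinds of vertex pairs to check in $\Gamma_i$: (a) tree edges, which must have an \emph{empty} closed $\beta$-region $R[u,v,\beta]$ of witnesses for every $\beta\in[1,\infty]$; and (b) non-edges, which must contain a witness in the \emph{open} region $R(u,v,\beta)$ for every $\beta$. For (b), the key geometric fact is that if a single point $w$ from $\Gamma_{1-i}$ lies in $R(u,v,1)$ — the Gabriel region, the \emph{smallest} of the $\beta$-regions — then $w$ lies in $R(u,v,\beta)$ for all $\beta\ge1$, since the regions are nested increasingly in $\beta$. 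So it suffices to produce a Gabriel witness for every non-edge. The ``opposite-corner'' placement of the two roots is designed precisely so that for any two vertices $u,v$ in the same tree, the corresponding vertex of the other tree that lies roughly ``between'' them (in the sense used in Lemma~\ref{le:stars} and Lemma~\ref{le:smaller-stars}, e.g.\ the vertex at $b_{1-i}$ for a non-edge crossing the central strip) falls inside the Gabriel disk. One then checks, subtree by subtree and across subtrees, that non-edges within a child parallelogram get witnesses by the inductive hypothesis, while non-edges spanning two different child parallelograms (or involving a root) get a witness from the mirrored structure on the far side of the strip, exploiting that $b_0,b_1$ and the stacked child parallelograms all lie in a thin central band. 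For (a), the edges $(r_i,c_{i,m})$ and the recursively drawn subtree edges must avoid all witnesses even for $\beta=\infty$, whose region $R[u,v,\infty]$ is the whole closed strip perpendicular to $\overline{uv}$ — the largest region. This is the genuinely delicate part: one must place $r_0$ at $a_0$ and $r_1$ at $a_1$ far enough apart (and the child parallelograms narrow and well-separated enough) that for each tree edge $e$ the entire perpendicular strip around $e$ is free of vertices of the other tree; because the trees sit near opposite corners of $P$ along its long diagonal, the perpendicular strips around edges of $T_i$ sweep away from $T_{1-i}$, so by choosing $P$'s acute angles small enough (analogous to the $\pi/4$ bound in Property~\ref{pr:winged}) and the recursion's scale factors small enough, the strips miss the opposite tree entirely. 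I would formalize this with an explicit ``angular aperture'' invariant passed down the recursion, bounding the directions of all edges in each subtree relative to its parallelogram's diagonal, so that the finitely many strip-avoidance inequalities can all be satisfied simultaneously by a suitable uniform choice of parameters; pushing this invariant through the recursion, and confirming it is consistent with the witness requirements of part (b), is the crux of the proof.
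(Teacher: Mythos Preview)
Your high-level strategy is right and matches the paper's: induct on the tree, exploit the nesting of $\beta$-regions so that non-edges need only a Gabriel witness while edges need only an empty $\infty$-strip, and recursively draw each isomorphic subtree pair inside its own nicely oriented parallelogram. But the concrete plan diverges from the paper in ways that leave genuine gaps. You fix the outer parallelogram $P$ first and try to fit everything inside; the paper builds inside-out. It places \emph{all} subtree parallelograms $P_0,\dots,P_m$ (one per child, not $d-1$ of them) side by side in a horizontal strip $H$, then chooses $r_0,r_1$ on vertical lines far above/below $H$ (as intersections of three explicitly constructed half-lines $I'_i\cap I''_i\cap I'''_i$), and only then reads off the outer parallelogram --- whose corner $b_0$ turns out to be the root $r_{0,0}$ of the leftmost subtree, not a specially placed first child --- and rotates it to make it nicely oriented with no vertical edges. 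Fixing $P$ in advance would require knowing ahead of time how elongated it must be, which depends on the whole tree.

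The main gap is your edge-emptiness argument. The claim that ``perpendicular strips around edges of $T_i$ sweep away from $T_{1-i}$'' because the trees sit near opposite corners is not correct: the two trees are interleaved, since each $P_j$ contains vertices of both $T_{0,j}$ and $T_{1,j}$. So an edge of $T_{0,j}$ must have its $\infty$-strip avoid vertices of $T_{1,j'}$ for every $j'\neq j$ and also $r_1$. The paper handles this not with an angular invariant but with explicit horizontal separation (its Conditions~(ii)--(iii)): because the inductive hypothesis guarantees no edge in any $P_j$ is vertical, each edge's $\infty$-strip has finite horizontal extent on $\partial H$, so pushing the $P_j$ far enough apart suffices. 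The only invariant carried through the recursion is ``no vertical edge,'' re-established at every level by the final rotation step. Your ``angular aperture'' invariant is strictly stronger, is never defined, and is not obviously maintainable --- the root-to-child edges introduced at each level point in quite different directions, so a fixed aperture bound does not clearly survive the recursion.
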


\begin{proof}
Let $r_0$ be any vertex of $T_0$ and let $r_1 \in T_1$ be the isomorphic image of $r_0$.
We will show by induction on the depth $\delta$ of $(T_0, r_0)$ that $\langle (T_0,r_0), (T_1,r_1) \rangle$ admits an MW-$\beta$ parallelogram drawing of $\langle T_0, T_1 \rangle$ for any $\beta \geq 1$.

If $\delta = 0$ each $T_i$ consists of only its root $r_i$. Choosing any nicely oriented parallelogram with $\overline{r_0r_1}$ as its long diagonal will result in a valid MW-$\beta$ drawing.
Assume the claim holds for $\delta \leq k$ and suppose $\delta = k+1$.

Let $\langle (T_{0,0}, r_{0,1}), (T_{1,0}, r_{1,0}) \rangle, \dots, \langle (T_{0,m}, r_{0,m}), (T_{1,m}, r_{1,m}) \rangle$ be the pairs of isomorphic rooted trees resulting from deleting $r_i$ from $T_i$.
By induction, each $\langle (T_{0,j}, r_{0,j}), (T_{1,j}, r_{1,j}) \rangle$ with $0 \leq j \leq m$ admits a parallelogram drawing which is an MW-$\beta$ drawing.
Let $H$ be any horizontal strip defined by two parallel lines $y = s$ and $y=t$ such that $s < t$.
We uniformly scale and translate the parallelogram drawings of $\langle (T_{0,j}, r_{0,j}), (T_{1,j}, r_{1,j}) \rangle$ such that $y(r_{0,j}) = t$ and $y(r_{1,j}) = s$. Note that this operation does not change any of the $\beta$-proximity properties of any of the tree pairs.

Let $P_j=(a_{0,j},b_{0,j},a_{1,j},b_{1,j})$ be the parallelogram that supports the MW-$\beta$ drawing $\langle (\Gamma_{0,j}), (\Gamma_{1,j}) \rangle$ of $\langle (T_{0,j}, r_{0,j}), (T_{1,j}, r_{1,j}) \rangle$. Let $\ell_j$ and $\ell'_j$ be two half-lines such that $\ell_j$ starts at $r_{0,j}$, is orthogonal to $\overline{r_{0,j}, b_{0,j}}$, and crosses $H$, and $\ell'_j$ starts at $r_{1,j}$, is orthogonal to $\overline{r_{1,j}, b_{1,j}}$, and crosses $H$; see \cref{img:subtrees}.
We position $P_{j+1}$ such that 
\begin{enumerate*}[label=(\roman*)]
	\item\label[condition]{cond1} $\ell_{j+1}$ is to the right of $\ell'_j$;
	\item\label[condition]{cond2} for any edge $e_{1,j} = (u_{1,j}, v_{1,j})$ in $T_{1,j}$, $r_{0,j+1}$ is to the right of the rightmost intersection point between $H$ and $R[u_{1,j}, v_{1,j},\infty]$ (since by inductive hypothesis no edge of $\Gamma_{1,j}$ is vertical, the coordinates of such points are finite); and
    \item\label[condition]{cond3} for any edge $e_{0,j+1} = (u_{0,j+1}, v_{0,j+1}) \in T_{0,j+1}$, $r_{1,j}$ is to the left of the leftmost intersection point between $H$ and  $R[u_{0,j+1}, v_{0,j+1},\infty]$ (by inductive hypothesis, the coordinates of such points are finite).
\end{enumerate*}

\cref{cond1} guarantees that for any vertices $v_{1,j+1} \in \Gamma_{1,j+1}$ and $v_{1,j} \in \Gamma_{1,j}$, we have $\angle(v_{1,j+1},r_{0,j+1}, v_{1,j}) > \frac{\pi}{2}$ and thus $r_{0,j+1} \in R(v_{1,j+1},v_{1,j},1)$ and $r_{0,j+1} \in R(v_{1,j+1},v_{1,j},\beta)$ for any $\beta \geq 1$. Similarly, for any vertices $v_{0,j+1} \in \Gamma_{0,j+1}$ and $v_{0,j} \in \Gamma_{0,j}$, we have $r_{1,j}\in R(v_{0,j+1},v_{0,j},\beta)$ for any $\beta \geq 1$. \cref{cond2,cond3} guarantee that for any pair of adjacent $v_{i,j},u_{i,j}$ in $\Gamma_{i,j}$, there is no witness in $R[v_{i,j},u_{i,j},\infty]$ and thus no witness in $R[v_{i,j},u_{i,j},\beta]$ for any finite $\beta \geq 1$.
\begin{figure}[t]
	\centering
	\includegraphics{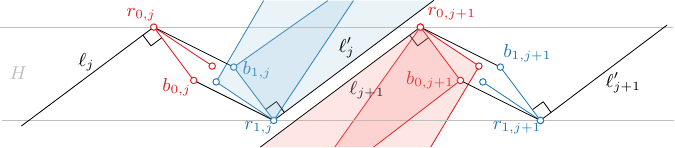}
	\caption{Parallelograms $P_j$ and $P_{j+1}$ placed inside $H$.}
	\label{img:subtrees}
\end{figure}
We now show how to place the roots $r_0 \in T_0$ and $r_1 \in T_1$ to produce an MW-$\beta$ parallelogram drawing of $\langle (T_0, r_0), (T_1, r_1) \rangle$ for any $\beta \in[1,\infty]$.

Let $L_0$ be the vertical line through $r_{0,0}$ and let $L_1$ be the vertical line through $r_{1,m}$; see \cref{fi:tree1}. We show how to place $r_i$ on $L_i$ such that the \emph{closed} $\beta$-region $R[r_i,r_{i,j},\infty]$ does not contain any witness, while for any other vertex $v \in T_i$, the \emph{open} $\beta$-region $R(r_i,v,1)$ contains a witness. This implies that $R[r_i,r_{i,j},\beta]$ does not contain any witnesses for all finite values of $\beta$ and that $R(r_i,v,\beta)$ contains some witnesses for every $\beta \geq 1$.

\begin{figure}[t]
 \centering
		\subcaptionbox{Construction of $L_0,L_1,h_0,h_1,\alpha_1,f_0$\label{fi:tree1}}{\includegraphics[page=1]{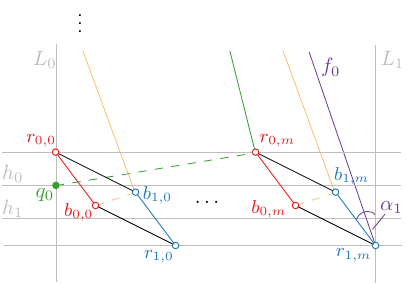}}
    \hfil
        \subcaptionbox{Construction of $I'_0,I''_0,I'''_0$\label{fi:tree2}}[.35\textwidth]{  \includegraphics[page=2]{images/trees2.pdf}}  
	\caption{Placing $r_0$ in the proof of \cref{th:main}.}
  \label{fi:root-placement}  
\end{figure}

We proceed in three steps. In the first step, we identify an interval $I'_i$ of $L_i$ such that for any point $p'\in I'_i$ and for any $r_{i,j}$, $R[p',r_{i,j},\beta]$ contains no witnesses ($i=0,1$, $0 \leq j \leq m$).
In the second step, we identify an interval $I_i''$ of $L_i$ such that for each vertex $v \neq r_{i,j}$ ($i=0,1$, $0 \leq j \leq m$) in $\Gamma_i$ and for each point $p'' \in I''_i$,   $R(p'',v,\beta)$ contains a witness.
In the third step, we identify an interval $I'''_i$ of $L_i$ such that for any point $p_0 \in I_0'''$ the segment $\overline{p_0b_{1,m}}$ does not intersect any parallelogram $P_j$ with $0 \leq j \leq m$. Similarly, for any point $p_1 \in I_1'''$, the segment $\overline{p_1,b_{0,0}}$ does not intersect $P_j$ for $0 \leq j \leq m$. As we will see, $I'_i \cap I_i'' \cap I_i'''$ is a half-infinite strip for $i = 0,1$; see \cref{fi:tree2}. We will describe how to obtain the intervals $I'_0,I''_0,I'''_0$; the intervals $I'_1,I''_1,I'''_1$ can be constructed symmetrically.

We start by defining $I_0'$. By construction of the MW-$\beta$ drawing of the forests $T_{0,0}, \dots T_{0,m}$ and $T_{1,0} \dots T_{1,m}$, there exist horizontal lines $h_0, h_1$ in the interior of $H$ such that $h_i$ separates $r_{i,0}, \dots r_{i,m}$ from every other vertex in the forest; see \cref{fi:tree1}. Let $q_0$ be the intersection point of $h_0$ and $L_0$.
Let $z'_0$ be the intersection point of $L_0$ with the line through $r_{0,m}$ perpendicular to $\overline{r_{0,m}q_0}$.
Let $I'_0 = \lbrace z \in L_0 : y(z) \geq y(z'_0) \rbrace$. 
Observe that for any $p_0 \in I'_0$ and any $r_{0,j}$, $R[p_0,r_{0,j},\infty]$ contains no witnesses.

We now define $I_0''$. For any parallelogram $P_j$ and any vertex $v_{0,j} \in T_{0,j}\setminus\{r_{0,j}\}$, 
let $z_{0,j}$ be the intersection of $L_0$ with the line through $b_{1,j}$ perpendicular to $\overline{b_{1,j}v_{0,j}}$. Let $z_{0}''$ be the $z_{0,j}$ of maximum $y$-value over all $z_{0,j}$ ($0 \leq j \leq m$). Let $I''_0 = \lbrace z \in L_0 : y(z) \geq y(z_{0}'') \rbrace$. 
Observe that for any point $p_0 \in I_0''$ and for any $v_{0,j} \in T_{0,j}\setminus\{r_{0,j}\}$, we have that $\angle(v_{0,j}, b_{1,j}, p_0) \geq \frac{\pi}{2}$ and thus $b_{1,j} \in R[p_0,v_{0,j},1]$.

We now define $I_0'''$.
Let $\alpha_0$ be the acute angle formed by $L_0$ and the segment $\overline{r_{0,0}b_{0,0}}$. Let $\alpha_1$ be the acute angle formed by $L_1$ and the segment $\overline{r_{1,m}b_{1,m}}$ and let $\alpha = \min\{\alpha_0, \alpha_1\}$. 
Let $f_0$ be a half-line starting at $r_{1,m}$, having negative slope, and forming an acute angle of $\alpha/2$ with $L_1$.
Let $z_0'''$ be $f_0 \cap L_0$ and let $I'''_0 = \lbrace z \in L_0 : y(z) \geq y(z'''_0) \rbrace$.

Let $I_i = I_i' \cap I_i'' \cap I_i'''$ and 
let $p_i \in I_i$ be such that $\overline{p_0r_{1,m}}$ is parallel to $\overline{p_1r_{0,0}}$. We draw $r_i$ at $p_i$, which produces an MW-$\beta$ drawing of $\langle T_0,T_1 \rangle$ in a parallelogram $P = \langle a_0,b_0,a_1,b_1 \rangle = \langle r_0,b_{1,m},r_1,b_{0,0} \rangle$. 
This is however not yet a parallelogram drawing, as $y(b_0) = y(b_{0,0}) > y(b_{1,m}) = y(b_1)$ and some edges are vertical. 

\begin{figure}[t]
    \centering
    \subcaptionbox{Definition of $\gamma_0,\gamma_1,\hat f_0,\hat f_1$\label{fig:rotation-before}}{\includegraphics[page=1]{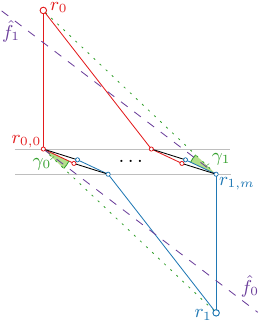}}
    \hfill
    \subcaptionbox{After rotation\label{fig:rotation-after}}{\includegraphics[page=2]{images/rotation.pdf}}
    \caption{Rotating the drawing to obtain an MW-$\beta$ parallelogram drawing.}
    \label{fig:rotation}
\end{figure}

To complete the proof, we thus show how to rotate $P$ to produce an MW-$\beta$ parallelogram drawing. Refer to \cref{fig:rotation-before}.
Let $\gamma_0$ be the angle between $\overline{r_{0,0}r_1}$ and $\overline{r_{0,0}b_{0,0}}$ and let $\gamma_1$ be the angle between $\overline{r_{1,m}r_0}$ and $\overline{r_{1,m}b_{1,m}}$;  Let $\gamma = \min\{\gamma_0, \gamma_1\}$.
Let $\hat{f_0}$ be the ray originating at $r_{0,0}$, forming an angle $\gamma' < \gamma$ with, and lying above, segment $\overline{r_{0,0}r_1}$, so that no edge of the drawing is perpendicular to $\hat{f_0}$. Let $\hat{f_1}$ be the ray originating at $r_{1,m}$ having opposite direction to $\hat{f_0}$. Observe that $\hat{f_0}$ and $\hat{f_1}$ are parallel and that any vertex of $T_i$ except $r_i$ is in the strip between $\hat{f_0}$ and $\hat{f_1}$. We now rotate $P$ counterclockwise until $\hat{f_0}$ and $\hat{f_1}$ become horizontal; see \cref{fig:rotation-after}. This produces a parallelogram drawing of $\langle T_0, T_1 \rangle$, since no edge is vertical, $y(r_0) > y(r_{1,m}) > y(r_{0,0}) > y(r_1)$, and $x(r_0) < x(r_{0,0}) < x(r_{1,m}) < x(r_1)$.
\end{proof}

\section{Pruning Leaves from MW-$\beta$ Drawings of Isomorphic Trees}\label{se:non-iso-trees}

In this section, we explore the question of how far from isomorphic two trees might be while still allowing an MW-$\beta$ drawing. We consider the MW-$\beta$ drawing $\langle \Gamma_0, \Gamma_1 \rangle$ constructed in the proof of \cref{th:main} and ask whether it is possible to prune some leaves from $\Gamma_1$ and still have an MW-$\beta$ drawing of the resulting trees. Precisely, we show that there are cases when we can remove linearly many leaves from $\Gamma_1$ and still obtain an MW-$\beta$ drawing of the resulting tree for any $\beta \in [1, \infty]$. 
It may be worth recalling that Lenhart and Liotta proved that two stars admit an MW-$1$ drawing if and only if the cardinalities of their vertex sets differ by at most two~\cite{DBLP:conf/gd/LenhartL22}.

Let $(T,r)$ be a rooted tree and let $\mathcal{L}$ be a set of leaves of $T$. The vertex $v$ is a \emph{cousin} of a vertex $v'$ if $v$ and $v'$ have a common grandparent but no common parent, i.e., there is a vertex $w$ such that a length-2 directed path $w,p,v$ and a length-2 directed path $w,p',v'$ with $p\neq p'$ exist.
We say that $\mathcal{L}\neq\emptyset$ is \emph{sparse} if, for every $v\in \mathcal{L}$,
\begin{enumerate*}[label=(\roman*)]
    \item $v$ has at least one sibling,
    \item every sibling $v'$ of $v$ is a leaf with $v'\notin\mathcal{L}$, and
    \item $v$ has a cousin $w$ such that $w\notin\mathcal{L}$ and, for all siblings $w'$ of $w$, $w'\notin\mathcal{L}$.
\end{enumerate*}
Note that the existence of a sparse set implies that $(T,r)$ has height at least 2, otherwise there is no vertex that has a cousin.

\begin{restatable}[\restateref{th:non-isomorphic}]{theorem}{NonIsomorphic}
\label{th:non-isomorphic}
Let $(T,r)$ be a rooted tree and let $\mathcal{L}$ be a sparse set of leaves of $T$. Then the pair $\langle T, T \setminus \mathcal{L} \rangle$ of trees admits an MW-$\beta$ drawing for all $\beta \in [1,\infty]$.
\end{restatable}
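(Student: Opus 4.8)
\noindent\emph{Proof strategy.}
The plan is to run the recursive construction of \cref{th:main} on the isomorphic pair $\langle T,T\rangle$ (both copies rooted at $r$), with a small local modification at each parent of a deleted leaf, and then delete the points of $\mathcal{L}$ from the second copy. The first step is a reduction. Deleting vertices from $\Gamma_1$ can never create a witness, so every adjacent pair in either copy stays witness‑free; and every pair of surviving vertices of $\Gamma_1$ that is non‑adjacent in $T\setminus\mathcal{L}$ is already non‑adjacent in $T$, hence still witnessed by the untouched $\Gamma_0$. Thus it suffices to ensure that \emph{every non‑edge of $\Gamma_0$ keeps a witness among the surviving points of $\Gamma_1$}.

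Second, I would pin down which non‑edges of $\Gamma_0$ are endangered. Each $v\in\mathcal{L}$ is a leaf whose siblings are all leaves (condition~(ii)), so its parent $p$ spans a star $T^{(p)}$, and at most one child of $p$ lies in $\mathcal{L}$ — two such children would be siblings of one another; call it $v^{*}$. In a drawing of $T^{(p)}$ produced as in \cref{le:stars}, the copy‑$1$ leaves witness the consecutive copy‑$0$ leaf pairs, the copy‑$1$ vertex at the $b$‑corner of $T^{(p)}$ is the witness that the root‑placement step of \cref{th:main} uses at the grandparent $g$ for the pairs $\{g_0,u_0\}$ with $u$ a leaf of $p$, and the copy‑$1$ root of $T^{(p)}$ handles cross‑subtree pairs. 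A short case check shows the endangered non‑edges are precisely, for each such $p$: the pair of $(v^{*})_0$ with a neighbouring leaf of $p$, and the grandparent pairs $\{g_0,u_0\}$.

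The repair is the core of the argument, and this is where sparseness enters. At each pruned‑leaf star‑parent $p$ I would draw the copy‑$1$ star with $v^{*}$ already absent, \emph{shifting} its remaining leaves so that every gap between consecutive copy‑$0$ leaves of $p$ still contains a surviving copy‑$1$ leaf, and taking the copy‑$1$ $b$‑corner of the resulting nicely oriented parallelogram to be a surviving leaf. This repairs the star‑internal pairs directly and the grandparent pairs through the interval $I_0''$ (resp.\ $I_1''$) in the proof of \cref{th:main}. Conditions~(i) and~(ii) — $T^{(p)}$ is a star with at least two leaves, at most one deleted — are exactly what makes this shifted layout available. Condition~(iii), the cousin $w\notin\mathcal{L}$ whose whole sibling class also avoids $\mathcal{L}$ and which lies in a sibling subtree $T^{(p')}$ of $T^{(p)}$, keeps ``$\mathcal{L}$ sparse'' a hereditary invariant of the recursion: when we delete the root of a non‑star subtree, each child subtree still contains the cousin of each of its deleted leaves, since the parent of a deleted leaf is never the root of that subtree. (If one instead prefers to keep the \cref{th:main} drawing intact and delete $\mathcal{L}$ afterwards, the same cousin provides a backup witness: placing $T^{(p')}$ next to $T^{(p)}$ at $g$ and pushing $g_0$ far away makes the proximity region of each $\{g_0,u_0\}$ engulf the surviving point $(w)_1$.)

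I expect the main obstacle to be carrying this through the recursion without side effects. Concretely I would: (a) prove a variant of \cref{le:stars}/\cref{le:smaller-stars} realising the shifted layout of $\langle T^{(p)},\,T^{(p)}\setminus\{v^{*}\}\rangle$ in a nicely oriented parallelogram, checking in particular that no copy‑$1$ leaf lands in the closed region $R[p_0,u_0,\infty]$ of any copy‑$0$ edge $\overline{p_0u_0}$ — which is what forces $p_0$ to sit far ``above'' the leftmost leaf; (b) re‑run the induction of \cref{th:main} with these modified star drawings plugged in, checking that \cref{cond1,cond2,cond3} and the three half‑infinite intervals used to place each root still behave when the copy‑$1$ forests are obtained from the copy‑$0$ forests by removing a sparse leaf set — which should go through because deleting copy‑$1$ vertices only relaxes every ``no witness on an edge'' and every ``a vertex lies to the right of a strip'' requirement; and (c) confirm that none of the backup/shifted witnesses is itself a witness of an edge, the place where the fine positional control and the survival of all of $w$'s siblings are needed.
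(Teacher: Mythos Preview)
Your plan is sound but differs from the paper's argument in its core mechanism. The paper does \emph{not} modify any star drawing. Instead it reorders the children at each level so that (a) within every pruned star the deleted leaf is placed \emph{rightmost} --- which, because the witness for a copy-0 leaf pair in the \cref{th:main} layout is always the copy-1 leaf of \emph{smaller} index, means its removal kills no within-star witness --- and (b) the rightmost child subtree at each level is unpruned (``type~(C)'' in the base case, ``type~(D)'' inductively); this is precisely where condition~(iii) is spent. The $b_1$-corner of each pruned sub-parallelogram is then allowed to disappear: the paper replaces the per-subtree witnesses $b_{1,j}$ of the interval $I_0''$ by a \emph{single global} witness $w_1$, the $b_1$-corner of that rightmost unpruned child, and a short new \emph{strip-ratio lemma} (one can push $r_0$ arbitrarily far along the ray $b_0a_0$ without destroying the parallelogram-drawing invariants) forces $w_1$ to be the topmost copy-1 point in the interior of $H$, after which pushing $r_0$ high enough makes $w_1$ witness every $\{r_0,u_0\}$. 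Your route instead keeps each $b_{1,j}$ alive by building a nicely-oriented parallelogram drawing of the non-isomorphic pair $\langle K_{1,k+1},K_{1,k}\rangle$; this is a genuine new primitive you would have to supply, and it is not a variant of \cref{le:stars} or \cref{le:smaller-stars}, which belong to the winged-parallelogram construction of \cref{se:iso-caterpillars} rather than to \cref{th:main}. Either trade works: yours avoids the strip-ratio lemma and the subtree reordering, the paper's avoids the non-isomorphic-star primitive. One caution on your parenthetical alternative: placing the cousin subtree \emph{adjacent} to each pruned star does not scale when several pruned stars at the same level share a single unpruned cousin, which is all condition~(iii) promises; the paper's ``rightmost, used once globally'' placement is exactly what makes that idea go through.
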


\begin{restatable}[\restateref{co:non-isomorphic}]{corollary}{NonIsomorphicCor}
\label{co:non-isomorphic}
For any $m\ge 1$ and $n = 7m+1$, there exist tree pairs $\langle T_0, T_1 \rangle$ with $|V(T_1)| \leq 1+\frac{5}{6}(|V(T_0)|-1)$ that admit an MW-$\beta$ drawing for all $\beta\in [1,\infty]$.
\end{restatable}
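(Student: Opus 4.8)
The plan is to produce, for every $m\ge 1$, an explicit rooted tree $(T_0,r)$ together with a sparse leaf set $\mathcal L$ whose size is a fixed fraction of $|V(T_0)|$, and then to apply \cref{th:non-isomorphic} to the pair $\langle T_0,\,T_0\setminus\mathcal L\rangle$. Since \cref{th:non-isomorphic} already delivers the MW-$\beta$ drawability of $\langle T,T\setminus\mathcal L\rangle$ for \emph{any} rooted tree and \emph{any} sparse $\mathcal L$, the whole task reduces to a combinatorial construction (exhibit a tree with a large sparse leaf set) plus a vertex count.

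First I would design a small ``pruning gadget'' $G$ of height $2$. Its internal root $g$ will serve as the common grandparent demanded by condition~(iii) of sparseness, so $g$ gets two kinds of children: one or more \emph{pruning parents}, each a vertex $p$ whose children are one designated leaf $v$ (to be placed in $\mathcal L$) together with at least one further, undesignated leaf; and one \emph{donor parent} $q$, all of whose children are undesignated leaves. Inside $G$ the three conditions then hold for every designated $v$: $v$ has an undesignated leaf sibling, all of its siblings are undesignated leaves, and any child $w$ of $q$ is a cousin of $v$ (grandparent $g$, distinct parents $p\neq q$) that is undesignated and has no designated sibling. Using a gadget with two pruning parents sharing a single donor parent amortises the donor's structural overhead over two pruned leaves, which is what lets the final ratio drop below $\tfrac{5}{6}$; one then calibrates the size of $G$ so that the global vertex count lands on $7m+1$.

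Next I would form $T_0$ by taking a new root $r$ and attaching pairwise vertex-disjoint copies $G_1,\dots,G_k$ of $G$, so that the grandparents $g_1,\dots,g_k$ are exactly the children of $r$, and set $\mathcal L$ to be the union of the designated leaves over all copies. Because the copies are vertex-disjoint and $r$ has no leaf children, no designated leaf acquires a new sibling or a new cousin in $T_0$; hence every verification carried out within a single $G_i$ lifts verbatim, $\mathcal L$ is sparse in $T_0$, it is nonempty, and $T_0$ has height at least $2$. Setting $T_1:=T_0\setminus\mathcal L$, \cref{th:non-isomorphic} then yields an MW-$\beta$ drawing of $\langle T_0,T_1\rangle$ for all $\beta\in[1,\infty]$. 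It remains to count: each gadget contributes a fixed number of vertices to $T_0$ and a fixed number $\ge 1$ of them to $\mathcal L$, and $r$ is one extra vertex, so $|V(T_0)|=7m+1$ and $|V(T_1)|=|V(T_0)|-|\mathcal L|$; the gadget is chosen so that $|\mathcal L|\ge\tfrac{1}{6}\bigl(|V(T_0)|-1\bigr)$, which is exactly the inequality equivalent to $|V(T_1)|\le 1+\tfrac{5}{6}\bigl(|V(T_0)|-1\bigr)$.

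I expect the only genuinely delicate point to be the gadget design. Conditions~(i) and~(ii) of sparseness are local to a single parent and are trivial to arrange, but condition~(iii) is non-local: it requires a common grandparent and an undesignated cousin with no designated siblings, which is precisely why each gadget must carry its own grandparent and its own donor parent. The balancing act is to keep this overhead small — essentially one donor amortised over as many pruned leaves as the count allows — so that the fraction of surviving vertices stays at most $\tfrac{5}{6}$, while making sure the three conditions never interfere with one another once the gadgets are assembled under the shared root.
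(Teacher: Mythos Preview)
Your plan is correct and is exactly the paper's approach: hang height-$2$ gadgets from a common root, check that the designated leaves form a sparse set, invoke \cref{th:non-isomorphic}, and count. Two small points of difference are worth noting. First, you do not need two pruning parents to reach the bound: the paper's gadget has a single pruning parent $u'$ with two leaf children (one designated) and a single donor parent $u$ with one leaf child, for six vertices and one designated leaf per gadget; this already meets the inequality with equality, giving $|V(T_0)|=6m+1$ and $|V(T_1)|=5m+1=1+\tfrac{5}{6}(|V(T_0)|-1)$. Second, the ``$n=7m+1$'' in the statement is evidently a slip in the paper (its own proof produces $6m+1$), so your attempt to calibrate the gadget to seven vertices is unnecessary---and in fact infeasible, since a seven-vertex gadget can carry at most one designated leaf (two would force a third parent and hence at least nine vertices), and one designated leaf out of seven gives a pruning fraction of $1/7<1/6$.
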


\section{Concluding Remarks}\label{se:open}
In this paper, we studied the mutual witness proximity drawability of pairs of isomorphic trees. We adopted the well-known concept of open/closed $\beta$-proximity regions and considered any value of the parameter $\beta$ such that $\beta \geq 1$. For the special case of $\beta=1$, the definition of closed $\beta$-proximity region coincides with the definition of Gabriel proximity region. We showed in~ \cref{th:caterpillar} that any pair of isomorphic caterpillars admits a linearly separable mutual witness Gabriel drawing. 
We then extended this result in~\cref{th:main} to any value of
$\beta \geq 1$ and to any pair of isomorphic trees, but at the cost of losing linear separability.  

It would be interesting to establish whether any two isomorphic trees admit a linearly separable MW-$\beta$ drawing for $\beta \geq 1$. Also, even for the special case of caterpillars, extending the result of \cref{th:caterpillar} to values of $\beta > 1$ does not seem immediate. Finally, a characterization of those non-isomorphic pairs of trees that admit a mutual witness $\beta$-drawing continues to be elusive. \cref{th:non-isomorphic} shows that the trees in the pair may differ by linearly many vertices.

\section{Acknowledgements}
 We thank Stefan Näher for many helpful discussions, for implementing the caterpillar algorithm, and for creating a program to edit and verify MW-$[1]$ drawings that was very helpful in verifying our constructions.

\clearpage
\bibliographystyle{splncs04}
\bibliography{abbrv,biblio}

\clearpage

\appendix

\section{Omitted Proofs from \cref{se:iso-caterpillars}}

\LemStars*
\label{le:stars*}

\begin{proof}
We first consider the case $k=0$; see \cref{fi:stars-zero-app}. We place $r_0$ at position $a_0=(0,5)$, $v_{0,0}$ at position $b_0=(0,3)$, $r_1$ at position $a_1=(2,0)$, and $v_{1,0}$ at position $b_1=(2,2)$. This way, the angle inside $P=(a_0,b_0,a_1,b_1)$ at $a_0$ is smaller than $\pi/4$. We place the anchor $q_0$ at $(1.1,3)$ and $q_1$ at $(0.9,2)$ and compute the safe wedges $W_0,W_1$ and ports $p_0,p_1$ as above to obtain an MW-$[1]$-drawing inside a winged parallelogram $WP(P,q_0,q_1,W_0,W_1,p_0,p_1)$ with the desired properties.

Consider now the case that $k>0$. We will create an MW-$[1]$ drawing $\langle \Gamma_0,\Gamma_1\rangle$ that is point symmetric in the origin, i.e., a drawing with $x(r_1)=-x(r_0)$, $y(r_1)=-y(r_0)$, and $x(v_{1,i})=-x(v_{0,k-i+1})$, $y(v_{1,i})=-y(v_{0,k-i+1})$ for $1\le 0\le k$. By symmetry, we only have to argue that the edges of $T_0$ are realized correctly.

We first place the leaves $v_{0,i}$, $0\le i\le k$, at y-coordinate $0.5$ and x-coordinate $2i-k+0.5$, such that any pair $v_{0,i},v_{0,i+1}$ has distance 2 and $\angle(v_{0,i},v_{1,i},v_{0,i+1})=\pi/2$; see \cref{fi:stars-leaves-app}. Thus, $v_{1,i}$ lies in $R[v_{0,i},v_{0,j},1]$ for any two vertices with $0\le i<j\le k$, so no two leaves are adjacent.

Now we place $r_0$ with $x(r_0)=x(v_{0,0})$; see \cref{fi:stars-root-app}. We have to make sure that the regions $R[r_0,v_{0,i},1]$ 
contains no witness. Observe that, by definition, any Gabriel region $R[u,v,1]$ is contained in the disk around $u$ with radius 
$d(u,v)$. Hence, no point $w$ with $d(u,w)>d(u,v)$ can lie in $R[u,v,1]$. Consider the point 
$p=(x(v_{0,0}),y(v_{1,0}))=(2i-k+0.5,-0.5)$. By construction, we have $d(r_0,v_{1,i})<d(r_0,p)=y(r_0)+0.5$ and we want to make sure 
that $d(r_0,v_{0,i})\le d(r_0,v_{0,k}) < d(r_0,p)$ for each $0\le i\le k$. Consider now the triangle $\triangle(r_0,p,v_{0,k})$. If 
$\angle(p,v_{0,k},r_0)=\pi/2$, then by Pythagoras $d(r_0,p)>d(r_0,v_{0,k})$. Let $\alpha=\angle(v_{0,k},r_0,p)$ and 
$\beta=\angle(r_0,p,v_{0,k})$ with $\alpha+\beta=\pi/2$. Consider the point $q=(x(v_{0,k}),y(v_{1,k})=(k-0.5,-0.5)$ and consider the 
triangle $\triangle(p,q,v_{0,k})$. We have $\angle(r_0,p,q)=\pi/2$ and $\angle(r_0,p,v_{0,k})=\beta$, so $\angle(v_{0,k},p,q)=\alpha$. 
Since $\angle(p,q,v_{0,k})$, we also have $\angle(q,v_{0,k},p)=\beta$. Hence, the triangles $\triangle(r_0,p,v_{0,k})$ and 
$\triangle(p,q,v_{0,k})$ are congruent, so we have $\frac{d(r_0,p)}{d(p,v_{0,k})} = \frac{d(p,v_{0,k})}{d(v_{0,k},q)}$. Since 
$d(v_{0,k},q)=1$ by choice of $q$, we thus have $d(r_0,p)=d(p,v_{0,k})^2$. By Pythagoras' Theorem, 
$d(p,v_{0,k})^2 = d(p,q)^2+d(v_{0,k},q)^2 = 2k^2+1$. Thus, $y(r_0)=d(r_0,p)-0.5 = 2k^2+0.5$ ensures that no edges of $T_0$ has a witness in $\Gamma_1$. Furthermore, note that, by construction, $\beta$ is larger than $\pi/4$ since $1=d(v_{0,0},p)<d(v_{0,0},v_{0,k})$ as 
long as $k>0$, so $\alpha$ is smaller than $\pi/4$.

We choose the winged parallelogram $WP((a_0,b_0,a_1,b_1), q_0, q_1, W_0, W_1,p_0,p_1)$ as follows. For $(a_0,b_0,a_1,b_1)$, we choose the positions of $r_0,v_{0,0},r_1,v_{1,k}$, respectively. We place the point $q_0$ slightly to the right of $v_{0,k}$ at $(x(v_{0,k})+\varepsilon,0.5)$, and the point $q_1$ slightly to the left of $v_{1,0}$ at $(x(v_{1,0}-\varepsilon,-0.5)$ for some small enough $\varepsilon$.
The interior angle at points $a_0,a_1$ is smaller than $\pi/4$ as long as $2k^2+1>2k+1$, which is true for $k>0$,  and all leaves are placed on the desired positions. We choose the safe wedges $W_0,W_1$ and ports $p_0,p_1$ as in the definition. For an illustration, see \cref{fi:stars-drawing-app}.
\end{proof}

\begin{figure}[!t]
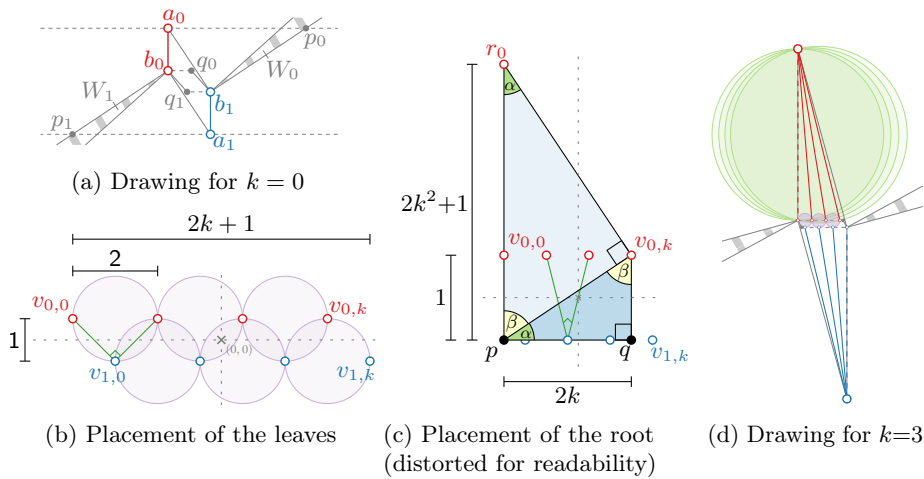

	\centering
 
    \begin{minipage}[b][5.3cm]{.4\columnwidth}
	 \subcaptionbox{Drawing for $k=0$\label{fi:stars-zero-app}}[\linewidth]{\includegraphics[page=5]{images/stars-construction.pdf}}
      \vfill
	 \subcaptionbox{Placement of the leaves\label{fi:stars-leaves-app}}{\includegraphics[page=1]{images/stars-construction.pdf}}
    \end{minipage}
    \hfill
    \subcaptionbox{Placement of the root\\ (distorted for readability)\label{fi:stars-root-app}}{~\includegraphics[page=2]{images/stars-construction.pdf}~}
    \hfill
    \subcaptionbox{Drawing for $k{=}3$\label{fi:stars-drawing-app}}{\includegraphics[page=4]{images/stars-construction.pdf}}
	\caption{Illustration for the Proof of \cref{le:stars}.}
	\label{fi:stars-construction-app}
\end{figure}

\section{Omitted Proofs from \cref{se:non-iso-trees}}

\NonIsomorphic*
\label{th:non-isomorphic*}

We start by a definition and a technical lemma. Let $\langle \Gamma_0, \Gamma_1 \rangle$ be a MW-$\beta$ parallelogram drawing of two trees in a parallelogram $P=\langle a_0,b_0,a_1,b_1\rangle$; see \cref{fig:strip-ratio-def}. The \emph{strip ratio} $\sigma(\Gamma_0,\Gamma_1)$ of $\langle \Gamma_0, \Gamma_1 \rangle$ is defined as \[\sigma(\Gamma_0,\Gamma_1)=\frac{|y(b_1)-y(b_0)|}{|y(a_0)-y(a_1)|}.\]

\begin{figure}[t]
    \centering
    \subcaptionbox{The strip ratio\label{fig:strip-ratio-def}}{\includegraphics[page=1]{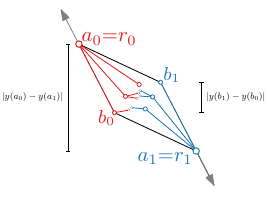}}
    \hfil
    \subcaptionbox{Lowering the strip ratio\label{fig:strip-ratio-lower}}{\includegraphics[page=2]{strip-ratio}}
    \caption{Illustration of the strip ratio and the proof of \cref{le:strip-ratio}.}
    \label{fig:strip-ratio}
\end{figure}

\begin{lemma}\label{le:strip-ratio}
Let $\langle T_0, T_1 \rangle$ be two isomorphic trees and let $\varepsilon > 0$ be an arbitrarily small real number. There exists a parallelogram drawing of $\langle T_0, T_1 \rangle$ whose strip ratio is $\sigma < \varepsilon$.
\end{lemma}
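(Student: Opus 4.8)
The plan is to re-run the construction in the proof of \cref{th:main} and to notice that it actually produces a one-parameter family of parallelogram drawings, along which the strip ratio tends to $0$. Root $T_0$ at an arbitrary vertex and $T_1$ at its isomorphic image; build the MW-$\beta$ parallelogram drawing of the forest obtained by deleting the two roots exactly as in \cref{th:main}, and freeze this sub-drawing once and for all. It occupies a bounded region of the plane, so the two vertices of the forest that will serve as $b_0$ and $b_1$ in the final parallelogram lie at a fixed distance $d$ from one another; in particular $|y(b_1)-y(b_0)|\le d$ no matter how the roots are placed afterwards. The roots $r_0,r_1$ are then placed on the vertical lines $L_0,L_1$ inside the half-lines $I_0$ (unbounded upward) and $I_1$ (unbounded downward), and — since $r_0\in L_0$ and $r_1\in L_1$ — the only remaining requirement of \cref{th:main}, namely $\overline{r_0 b_1}\parallel\overline{r_1 b_0}$, reduces to the single linear equation $y(r_0)+y(r_1)=y(b_0)+y(b_1)$. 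This leaves a one-parameter family of valid placements $y(r_0)=t$, $y(r_1)=y(b_0)+y(b_1)-t$, and by \cref{th:main} every sufficiently large $t$ yields a parallelogram drawing that is an MW-$\beta$ drawing for all $\beta\in[1,\infty]$.

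Next I would bound the strip ratio of the drawing obtained for a given large $t$. Its long diagonal is $\overline{a_0 a_1}=\overline{r_0 r_1}$, of length $D(t)$ growing linearly in $t$, while its short diagonal $\overline{b_0 b_1}$ keeps the fixed length $d$. The construction is completed by a rotation, which is a rigid motion and so preserves both lengths; this rotation brings the segment $\overline{b_0 r_1}$ to a fixed angle $\gamma'$ with the horizontal, and $\gamma'$ can be taken bounded below by a positive constant independently of $t$ (it only has to be smaller than $\min\{\gamma_0,\gamma_1\}$, and both $\gamma_0,\gamma_1$ converge to strictly positive limits as $t\to\infty$, since in the limit $\overline{b_0 r_1}$ and $\overline{b_1 r_0}$ become vertical while the incident forest edges do not). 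Because $r_0$ and $b_0$ both lie on $L_0$ while $\overline{b_0 r_1}$ and $\overline{r_0 r_1}$ share the endpoint $r_1$, the angle between $\overline{a_0 a_1}$ and $\overline{b_0 r_1}$ goes to $0$ as $t\to\infty$; hence after the rotation the long diagonal makes an angle with the horizontal that is bounded away from $0$, so $|y(a_0)-y(a_1)|\ge c\,D(t)$ for some constant $c>0$ not depending on $t$. Therefore the strip ratio is at most $d/(c\,D(t))$, which is below $\varepsilon$ as soon as $t$ is large enough.

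The delicate step is exactly this last estimate: one must be certain that lengthening the long diagonal is not undone by the final rotation flattening it toward horizontal. Making this precise requires tracking the angle $\gamma'$ of \cref{th:main} — checking that it stays bounded below by a positive constant uniformly in $t$ — and quantifying how nearly vertical the pre-rotation long diagonal is, again with an error controlled uniformly in $t$. The other ingredients are routine: increasing $t$ keeps $r_0\in I_0$ and $r_1\in I_1$, since these half-lines are determined by the frozen sub-drawing and do not move with $t$, so every MW-$\beta$ condition verified in the proof of \cref{th:main} is preserved; and for $t$ large, $\overline{r_0 r_1}$ is genuinely the longer diagonal, since its length is unbounded whereas $d$ is fixed.
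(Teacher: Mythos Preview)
Your argument is correct and is essentially the paper's proof: both freeze the forest and push $r_0$ farther up on $L_0$ and $r_1$ farther down on $L_1$, relying on the fact that the constraints $I_0,I_1$ in \cref{th:main} are upward/downward half-lines. The paper sidesteps your rotation bookkeeping by phrasing the same move in the post-rotation picture --- sliding $a_i$ along the ray $b_ia_i$ --- which leaves $b_0,b_1$ (hence the numerator of the strip ratio) literally unchanged and sends $|y(a_0)-y(a_1)|$ to infinity, so no estimate on $\gamma'$ is needed.
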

\begin{proof}
    We construct a MW-$\beta$ parallelogram drawing $\langle \Gamma_0,\Gamma_1\rangle$ for $\langle T_0,T_1\rangle$. If $\sigma(\Gamma_0,\Gamma_1)<\varepsilon$, then we are done. 
    Otherwise, we simultaneously move $a_0$ (and thus $r_0$) along the ray $b_0a_0$ upwards and $a_1$ (and thus $r_1$) along the ray $b_1a_1$ downwards until $\sigma(\Gamma_0,\Gamma_1)<\varepsilon$; see \cref{fig:strip-ratio}. Note that this movement corresponds to moving $a_0$ ($r_0$) vertically upwards and $a_1$ ($r_1$) vertically downwards before the final rotation step in the proof of \cref{th:main}. Since, for the proof of correctness, it was only important that these two points are far enough above/below the other vertices, the drawing remains an MW-$\beta$ drawing.
\end{proof}

\begin{proof}[of \cref{th:non-isomorphic}]
    First, note that, for any subtree $(T',r')$ of $(T,r)$ rooted in $r'$ of height at least 2, $\mathcal{L}\cap V(T')$ is sparse for $\langle T',r'\rangle$.

    We show by induction on the height $\delta\ge 2$ of $(T,r)$ that an MW-$\beta$ drawing can always be produced. 

    Consider first the base case $\delta = 2$. By definition of sparse sets, the children of $r$ cannot be in $\mathcal{L}$, as they have no cousins. Let $(T_0,r_0)=(T,r)$ and let $(T_{0,0}, r_{0,0}), \ldots, (T_{0,m}, r_{0,m})$ be the subtrees of $(T_0,r_0)$ resulting from deleting $r_0$ from $T_0$. Then each tree $(T_{0,j},r_{0,j})$ is of one of three types; see \cref{fig:almost-isomorphic-base}:
    \begin{enumerate}[label=(\Alph*)]
        \item $r_{0,j}$ is a leaf not in $\mathcal{L}$,
        \item $(T_{0,j},r_{0,j})$ has height 1 with exactly one of its leaves  $v_{0,j}\in\mathcal{L}$,
        \item $(T_{0,j},r_{0,j})$ has height 1 with no leaf in $\mathcal{L}$.
    \end{enumerate}
    Note that there must be at least one subtree of type (C), but there may be no subtrees of type (A) or (B). We now reorder the children of $r_0$ such that, from left to right, we first have all subtrees of type (A), then all subtrees of type (B), and then all subtrees of type (C). Within each subtree of $(T_{0,j},r_{0,j})$ type (B), we order the leaves such $v_{0,j}$ is the rightmost leaf; see \cref{fig:almost-isomorphic-base}.

    \begin{figure}[t]
        \centering
        \includegraphics[page=1]{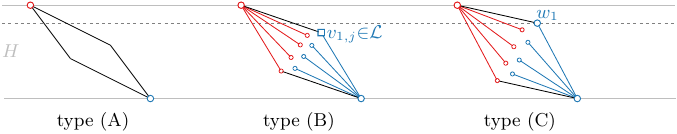}
        \caption{The three types of subtrees in the base case of \cref{th:non-isomorphic}.}
        \label{fig:almost-isomorphic-base}
    \end{figure}

    Let $(T_1,r_1)$ be isomorphic to $(T_0,r_0)$. We first compute a MW-$\beta$ parallelogram drawing of $\langle (T_0,r_0),(T_1,r_1)\rangle$ in a parallelogram $P=(a_0,b_0,a_1,b_0)$ according to the proof of \cref{th:main}, but with some small adjustments. Using \cref{le:strip-ratio}, we ensure that the rightmost subtree $(T_{i,m},r_{i,m})$, $0\le i\le 1$, which is of type (C), has the largest strip ratio among all subtrees $(T_{i,j},r_{i,j})$. Let $w_1$ be the rightmost leaf of $(T_{1,m},r_{1,m})$. Then, placing the subtrees $(T_{i,j},r_{i,j})$ in the horizontal strip $H$ as in the proof of \cref{th:main}, $w_1$ will be the rightmost and topmost vertex of $T_1$ in the interior of $H$.
    
    We place $r_0$ and $r_1$ as in the proof of \cref{th:main}, but with the additional constraint that for every vertex $u_0$ of $(T_0,r_0)$ in the interior of $H$, $\angle(u_0,w_1,r_0)\ge\pi/2$, so that $w_1$ lies in the $\beta$-region $R[r_0,u_0,\beta]$. Similar to the proof of \cref{le:strip-ratio}, this can be achieved by moving $r_0$ upwards along the ray $b_0r_0$. 
    Since $w_1$ belongs to a subtree of type (C), $w_1\notin\mathcal{L}$. Hence, after removing the leaves of $\mathcal{L}$, all edges between $r_0$ and any non-adjacent vertex $u_0$ of $T_0$ (which lies in the interior of $H$) still have $w_1$ as a witness; see \cref{fig:almost-isomorphic-base-rotation}.

    \begin{figure}[t]
        \centering
        \includegraphics[page=2]{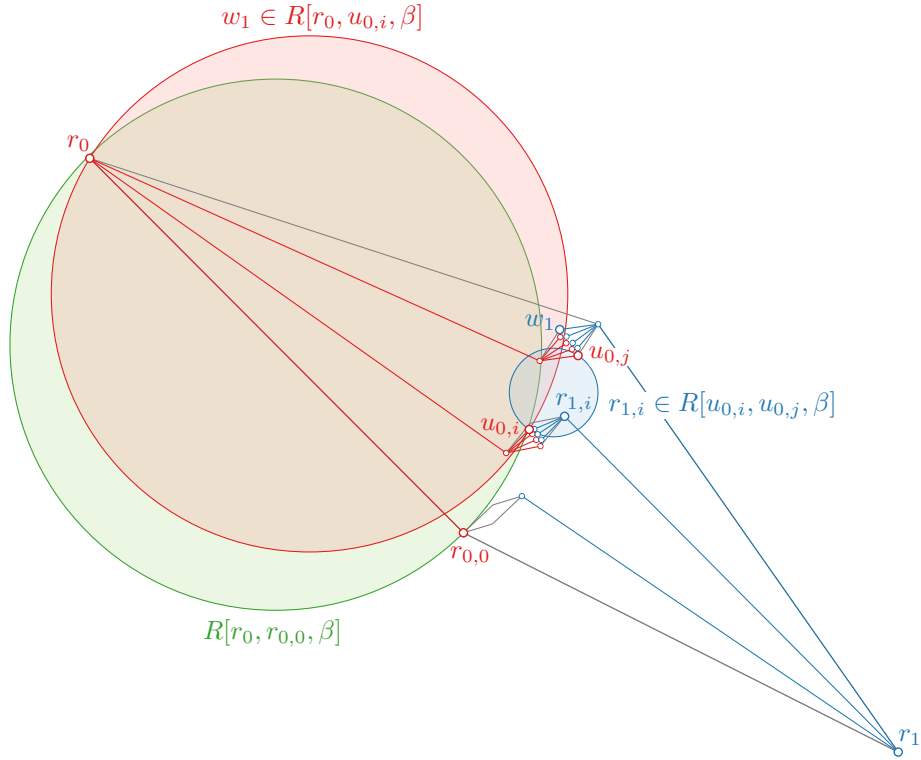}
        \caption{The drawing in the base case of \cref{th:non-isomorphic} after removing the leaves of $\mathcal L$. There is not witness in $R[r_0,r_{0,0},\beta]$ (green disk); $w_1\in R[r_0,u_{0,i},\beta]$ (red disk); $r_{1,i}\in R[u_{0,i},u_{0,j},\beta]$ (blue disk)}
        \label{fig:almost-isomorphic-base-rotation}
    \end{figure}
    
    Note that $r_{1,m}$ is placed at point $b_1$ of the parallelogram, and since $(T_{1,m},r_{1,m})$ is of type (C), $r_{1,m}$ is not a leaf and thus $r_{1,m}\notin\mathcal{L}$, so removing the leaves of $\mathcal{L}$ from $\Gamma_1$ does not destroy the MW-$\beta$ parallelogram drawing properties. 
    
    Furthermore, for any two vertices $u_{0,i}$ in $(T_{0,i},r_{0,i})$ and $u_{0,j}$ in $(T_{0,j},r_{0,j})$ with $0\le i<j\le m$ where $(T_{0,i},r_{0,i})$ is of type (B), we have that $r_{1,i}$ lies in the $\beta$-region $R[u_{0,i},u_{0,j},\beta]$, so we can remove $v_{1,i}$ from any subtree of type (B) without destroying the MW-$\beta$ drawing properties. Hence, we obtain a parallelogram MW-$\beta$ drawing of $\langle T, T\setminus\mathcal{L}\rangle$. Note that the strip ratio of the drawing can also be lowered by moving $r_0$ upwards along the ray $b_0r_0$ and $r_1$ downwards along the ray $b_1r_1$ as in the proof of \cref{le:strip-ratio}.

    Consider now the inductive case of $\delta>2$. Let $(T_0,r_0)=(T,r)$ and let $(T_{0,0}, r_{0,0}), \ldots, (T_{0,m}, r_{0,m})$ be the subtrees of $(T_0,r_0)$ resulting from deleting $r_0$ from $T_0$. Then each tree $(T_{0,j},r_{0,j})$ is of one of four types; see \cref{fig:almost-isomorphic-induction}:
    \begin{enumerate}[label=(\Alph*)]
        \item $r_{0,j}$ is a leaf not in $\mathcal{L}$,
        \item $(T_{0,j},r_{0,j})$ has height 1 with exactly one of its leaves  $v_{0,j}\in\mathcal{L}$,
        \item $(T_{0,j},r_{0,j})$ has height 1 with no leaf in $\mathcal{L}$,
        \item $(T_{0,j},r_{0,j})$ has height at least 2 but smaller than $\delta$.
    \end{enumerate}
    Note that there must be at least one subtree of type (D), but there might be no subtrees of type (A), (B) or (C). We now reorder the children of $r_0$ such that, from left to right, we first have all subtrees of type (A), then all subtrees of type (B), then all subtrees of type (C), and then all subtrees of type (D). Within each subtree of $(T_{0,j},r_{0,j})$ type (B), we order the leaves such $v_{0,j}$ is the rightmost leaf; see \cref{fig:almost-isomorphic-induction}. 

    \begin{figure}[t]
        \centering
        \includegraphics[page=3]{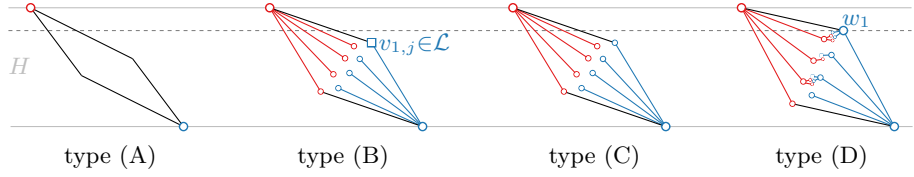}
        \caption{The four types of subtrees in the induction step of \cref{th:non-isomorphic}.}
        \label{fig:almost-isomorphic-induction}
    \end{figure}

    Let $\mathcal{L}'\subseteq \mathcal{L}$ be the set of leaves of $\mathcal{L}$ in the subtrees of type (D). Let $(T_1,r_1)$ be isomorphic to $(T_0\setminus\mathcal{L}',r_0)$. By induction, every pair of subtrees $\langle (T_{0,j},r_{0,j}),(T_{1,j},r_{1,j})\rangle$ of type (D) has a parallelogram MW-$\beta$ drawing where the strip ratio can be arbitrarily lowered.
    
    We arrange the parallelogram drawings of the subtrees $\langle (T_{0,j},r_{0,j}),(T_{1,j},r_{1,j})\rangle$ inside a horizontal strip $H$ as in the base case, using \cref{le:strip-ratio} to ensure that the drawing of $\langle (T_{0,m},r_{0,m}),(T_{1,m},r_{1,m})\rangle$, which is of type (D), has the largest strip ratio among all pairs of subtrees $\langle (T_{0,m},r_{0,m}),(T_{1,m},r_{1,m})\rangle$; see \cref{fig:almost-isomorphic-induction}. 
    
    Let $w_1$ be the topmost (and rightmost) vertex of $(T_{1,m},r_{1,m})$ inside $H$. We again move $r_0$ upwards along the ray $b_0r_0$ such that, For every vertex $u_0$ of $T_0$ in the interior of $H$, $\angle(u_0,w_1,r)\ge\pi/2$, so that $w_1$ lies in the $\beta$-region $R[r_0,u_0,\beta]$; see \cref{fig:almost-isomorphic-induction-rotation}. Since $w_1\notin\mathcal{L}$ (otherwise it would not be in $(T_{1,m},r_{1,m})$, as we already removed the leaves of $\mathcal{L'}$), after removing the leaves of $\mathcal{L}$, all edges between $r_0$ and any non-adjacent vertex $u_0$ of $T_0$ (which lies in the interior of $H$)  still have $w_1$ as a witness. Furthermore, the edges between disjoint subtrees still have witnesses following the same argument as in the base case, and $r_{1,m}$, which is not a leaf and thus not in $\mathcal{L}$, lies at point $b$ of the parallelogram. Hence, after removing all leaves of $\mathcal{L}$, we obtain a MW-$\beta$ parallelogram drawing of $\langle T,T\setminus \mathcal{L}\rangle$.
\end{proof}

\NonIsomorphicCor*
\label{co:non-isomorphic*}

\begin{proof}
    We construct an infinite family of trees and sets of leaves as follows. For any $m>0$, $(T,r)$ is a tree rooted in $r$ such that removing $r$ yields $m$ subtrees $(T_{0},r_{0}),\ldots,(T_{m-1},r_{m-1})$. Every subtree $T_{j}$, $0\le j<m$ consists of the following; see \cref{fig:corollary}.
    \begin{enumerate}[label=(\roman*)]
        \item The root $r_{j}$ has 2 children $u_{j}$ and $u'_{j}$;
        \item $u_{j}$ has one child $v_{j}$ which is a leaf
        \item $u'_{j}$ has two children $w_{j}$ and $w'_{j}$ which are leaves with $w'_{j}\in\mathcal{L}$.
    \end{enumerate}
    Then $\mathcal{L}$ is sparse, so $\langle T,T\setminus\mathcal{L}\rangle$ admits an MW-$\beta$ drawing by \cref{th:main}. Every subtree $T_j$ has 6 vertices, so $|V(T)|=6m+1$. $\mathcal{L}$ has one leaf per subtrees $T_j$, so $|\mathcal{L}|=m$ and thus $|V(T\setminus \mathcal{L})|=5m+1=1+\frac{5}{6}(|V(T)|-1)$.
\end{proof}

\begin{figure}[t]
    \centering
    \includegraphics[page=4]{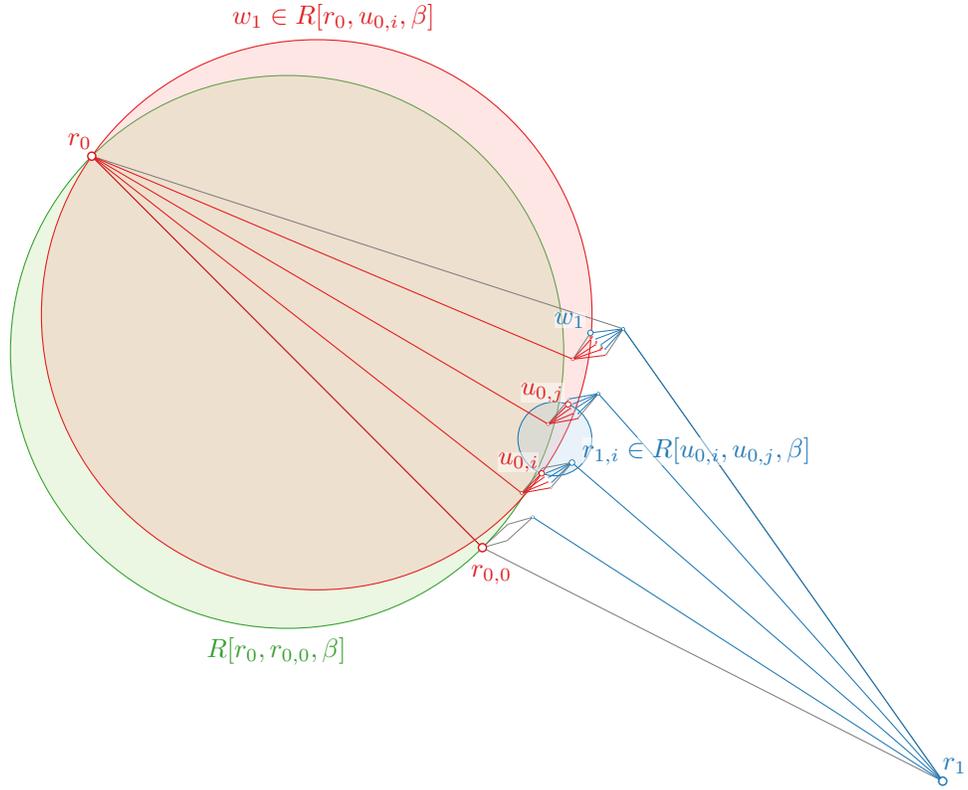}
    \caption{The drawing in the induction step of \cref{th:non-isomorphic} after removing the leaves of~$\mathcal L$. There is not witness in $R[r_0,r_{0,0},\beta]$ (green disk); $w_1\in R[r_0,u_{0,i},\beta]$ (red disk); $r_{1,i}\in R[u_{0,i},u_{0,j},\beta]$ (blue disk)}
    \label{fig:almost-isomorphic-induction-rotation}
\end{figure}

\begin{figure}[b]
    \centering
    \includegraphics{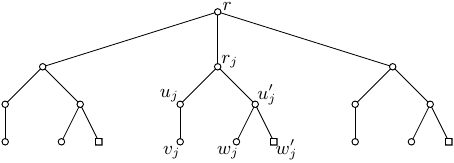}
    \caption{Construction of $(T,r)$ in the proof of \cref{co:non-isomorphic}.}
    \label{fig:corollary}
\end{figure}

\end{document}